\begin{document}
\makeatletter
\renewcommand{\fnum@figure}{Fig. \thefigure}
\renewcommand{\@algocf@capt@plain}{above}
\newtheorem{theorem}{Theorem}
\newtheorem{remark}{Remark}
\makeatother
\title{Backscatter-Assisted Wireless Powered Communication Networks Empowered by Intelligent Reflecting Surface}
\author{Parisa Ramezani, \textit{Graduate Student Member, IEEE}, and Abbas Jamalipour, \textit{Fellow, IEEE}
    \thanks{The authors are with the School of Electrical and Information Engineering, University of Sydney, NSW 2006, Australia, (e-mails: parisa.ramezani@sydney.edu.au, a.jamalipour@ieee.org).}
      \thanks{Copyright (c) 2021 IEEE. Personal use of this material is permitted. However, permission to use this material for any other purposes must be obtained from the IEEE by sending a request to pubs-permissions@ieee.org.}  
   }

\maketitle
\begin{abstract} Intelligent reflecting surface (IRS) has been recently emerged as an effective way for improving the performance of wireless networks by reconfiguring the propagation environment through a large number of passive reflecting elements. This game-changing technology is especially important for stepping into the Internet of Everything (IoE) era, where high performance is demanded with very limited available resources. In this paper, we study a backscatter-assisted wireless powered communication network (BS-WPCN), in which a number of energy-constrained users, powered by a power station (PS), transmit information to an access point (AP) via backscatter and active wireless information transfer, with their communication being aided by an IRS. Using a practical energy harvesting (EH) model which is able to capture the characteristics of realistic energy harvesters, we investigate the maximization of total network throughput. Specifically, IRS reflection coefficients, PS transmit and AP receive beamforming vectors, power and time allocation are designed through a two-stage algorithm, assuming minimum mean square error (MMSE) receiver at the AP. The effectiveness of the proposed algorithm is confirmed via extensive numerical simulations. We also show that our proposed scheme is readily applicable to practical IRS-aided networks with discrete phase shift values.  
\end{abstract}
\begin{IEEEkeywords} Wireless powered communication network, backscatter communication, intelligent reflecting surface, minimum mean square error.
\end{IEEEkeywords}
\section{Introduction}
Enabling devices to obtain their required energy for communication in a self-sustainable manner, wireless powered communication network (WPCN) is regarded as an inevitable component of the much-anticipated Internet of Everything (IoE) \cite{mag1,mag2,mag3}. Yet, like any other technology, WPCN has its shortcomings. WPCN devices need sufficient time to harvest and accumulate their required energy before being able to take part in communication. The energy accumulation time can get longer if the conditions are not favorable for harvesting energy. Longer energy harvesting (EH) time translates to a shorter time being remained for information transmission (IT), which negatively impacts the network performance. 

As another important enabler for IoE, backscatter communication allows devices to transmit information to their intended receivers without needing to generate active radio frequency (RF) signals. In backscatter transmission, devices modulate their information onto the impinging RF signals and reflect them toward their intended receiver. Backscatter communication is not the only technology which operates based on the reflection of wireless signals. Recently, intelligent reflecting surface (IRS) has come on the scene with the unprecedented capability of modifying the propagation environment. IRS is composed of a large number of low-cost passive reflecting elements which can be adaptively configured to impose phase and amplitude changes on the received signals and reflect them towards the desired direction. Both backscatter communication and IRS have great potentials to improve the performance and alleviate the shortcomings of WPCNs. 


Note that although both backscatter communication and IRS operate based on the reflection of the signals generated by other RF sources, there exists a fundamental difference between these two technologies. Particularly, IRS is mainly employed as an auxiliary entity which assists the communication between network elements by modifying and reflecting the incident signals, while a backscattering device modulates its own information onto the incoming signals and scatters the signals to its designated receiver \cite{irs11}. 
\subsection{Background}
The term WPCN was first introduced in \cite{wpcn0}, where the authors studied the uplink IT of a number of users to a hybrid access point (HAP) which also served the role of a downlink energy transmitter for the users. Since then, WPCN has gained considerable attention as an indispensable building block for the realization of the self-sustaible IoE. Extensive research has been done for extending WPCNs and improving their performance, among which we can name designing energy beamforming vectors \cite{wpcn_eb1,wpcn_eb2}, adding full-duplex (FD) operation for simultaneous downlink energy and uplink information transfer \cite{wpcn_fd}, using power beacons (PBs) to enhance the efficiency of wireless energy transfer (WET)\cite{wpcn_pb}, and integrating this technology with relay-based communication networks \cite{wpcn_relay2,wpcn_relay3,wpcn_relay4}. Readers can refer to \cite{mag1,mag2,mag3} for more details on the fundamentals of WPCNs. 

Integration of other technologies into WPCNs is another promising way for improving the performance of these networks and making them more efficient. In this regard, backscatter communication is an apt candidate which can complement the conventional wireless powered communication and notably enhance the performance of WPCNs. As backscatter communication consumes much lower energy than the active wireless powered communication, the instantaneous harvested energy of backscattering devices is sufficient to power their information transmission.  Therefore, using hybrid-mode radios which can switch between passive backscatter and active wireless powered transmissions has been proposed in order to allow users to choose between the two communication modes based on their channel and energy conditions \cite{rev1,rev2}. In our previous work, we have studied the integration of backscatter communication into WPCN, where the WET phase of the traditional WPCN model has been modified to let users use the energy signals for the dual purposes of EH and backscattering \cite{bs2}.  Reference \cite{bs4} provides a comprehensive survey on backscatter communication and the benefits of integrating this technology with WPCN.    

IRS, consisting of a large number of low-cost reflecting elements, has recently emerged as a revolutionary solution to improve the performance of wireless communication networks. IRS can modify the propagation environment and create favorable conditions for energy and information transfer without using energy-hungry RF chains. Thanks to its unparalleled features and functionalities, IRS has been recently applied to various networks and proved effective for improving the performance of wireless systems in a multifaceted manner \cite{irs1,irs3,irs4}. The integration of IRS with WPCNs has also been lately investigated in a few works. In \cite{irs10}, a TDMA-based WPCN is considered, where a self-sustainable IRS empowers the energy and information transfer between the HAP and the users. Time-switching and power-splitting schemes are studied for EH at the IRS and the authors optimize the IRS phase shifts, time and power allocation, and time switching/power splitting ratios for maximizing the total throughput. Reference \cite{rev3} proposes an IRS-assisted WPCN which operates based on a novel hybrid non-orthogonal multiple access (NOMA) and TDMA protocol. The users are grouped into different clusters and the TDMA strategy is applied for the transmission of different clusters, while the users in the same cluster transmit information to the HAP simultaneously using the concept of NOMA.  The readers can refer to \cite{irs12,irs13} for a profound survey on IRS-assisted systems and directions for future research in this area.

\subsection{Motivation}
 WPCN, Backscatter communication, and IRS will be key players for realizing the envisioned massive connectivity use cases in the imminent IoE era in next-generation networks \cite{Lina}. The WET-enabled communication in WPCN relieves network devices from the issues pertaining to energy limitation so that the focus can be shifted to performance optimization without being concerned about the energy shortage problem at network devices and making them use their available energy prudently. Backscatter communication also helps devices use available signals more efficiently by letting them enhance their throughput performance without consuming extra energy for active generation of RF signals. IRS is also a ground-breaking technology which can bring significant performance gains into wireless networks by enabling the dynamic modification of RF signals impinging on the surface, thus virtually refining the propagation environment and strengthening the signals at the receiver. These three innovative paradigms have been proved useful for enhancing the performance of wireless systems and paving the way towards the future self-sustainable networks. When joining together, these technologies can very well cater to the needs of massive number of power-constrained devices in the IoE era, enabling their efficient communication without burdening the network with excessive costs. This is the main motivation of this research work, where we study a backscatter-assisted WPCN (BS-WPCN) empowered by the incorporation of an IRS which assists in backscatter and active information transmission of WPCN users. 

Another key point in studying the performance of EH-enabled networks is that the model being used for EH at network devices must be able to capture the important characteristics of practical energy harvesters such as sensitivity and saturation. This is essential in order to avoid any notable mismatch between theoretical studies and real-life implementations. The conventional linear EH model used in most of the works on WPCN (e.g. \cite{wpcn0}) is not valid for disregarding the fundamental features of practical EH circuits. The well-known sigmoidal EH model \cite{elena,bruno} which accounts for the practical EH characteristics has also tractablity issues. Therefore, there is a compelling need for a simple and tractable way for modeling the behavior of EH circuits. This has motivated us to present a piece-wise linear EH model with three pieces which, though being simple, can model sensitivity and saturation effects of practical EH circuits.

\subsection{Contributions}
In this paper, we consider an IRS-empowered BS-WPCN, where the EH users are powered by WET from a power station (PS) and their information transfer to the access point (AP) is aided by an IRS. Our main contributions are summarized as follows:
\begin{itemize}
    \item We propose an IRS-empowered BS-WPCN, where multiple users transmit information to an AP via backscatter and active information transfer. The IRS assists the information transfer of the users by adjusting the reflection coefficients of its elements; a PS is also deployed to power the transmissions of energy-constrained users. To the authors' best knowledge, this is the first work to study the integration of IRS with the hybrid of active wireless powered communication and passive backscatter communication.
    \item  One of the most important design considerations in EH-enabled networks that is often overlooked is the non-linear input-output relationship of practical EH circuits. We propose to use a piece-wise linear EH model with three pieces at the users, which accounts for the sensitivity and saturation effects of practical energy harvesters. We verify the accuracy of this model by applying it to several real measurements and comparing the real measured values with the ones estimated via this model.
    \item We formulate a sum-throughput maximization problem and study the optimization of IRS reflection coefficients (including both amplitude reflection and phase shift at each element), time and power allocation, transmit beamforming at the PS, and receive beamforming at the AP, under the practical average and peak power constraints at the PS as well as the energy causality constraint at the users. Optimizing the amplitude reflection, which is mostly ignored in studies on IRS-aided systems is essential when users simultaneously transmit their information signals and the AP receives a combination of the information of all users. In such as case, the optimal amplitude reflection may be smaller than one. We propose a two-stage algorithm for solving the optimization problem, where the IRS reflection coefficients for empowering the users' backscatter communication are optimized in the first stage and the optimization of other variables is carried out in the second stage. The techniques of alternating optimization (AO), successive approximation (SA), semidefinite relaxation (SDR), and block coordinate descent (BCD) are used for optimizing the variables. 
    \item We validate the efficiency of the proposed scheme by conducting extensive numerical simulations. We compare the performance of our proposed algorithms with different benchmark schemes and discuss the superiority of our scheme over the benchmarks.  We also show that the presented algorithm can be readily applied to the scenario with practical IRS settings, where only a limited number of phase shifts can be chosen for the IRS elements. We particularly show that a 2-bit resolution for the phase shift of IRS elements is sufficient for achieving the envisioned near-optimal performance.

\end{itemize}

\begin{figure}[t!]
\centering
\includegraphics[width=2.8in]{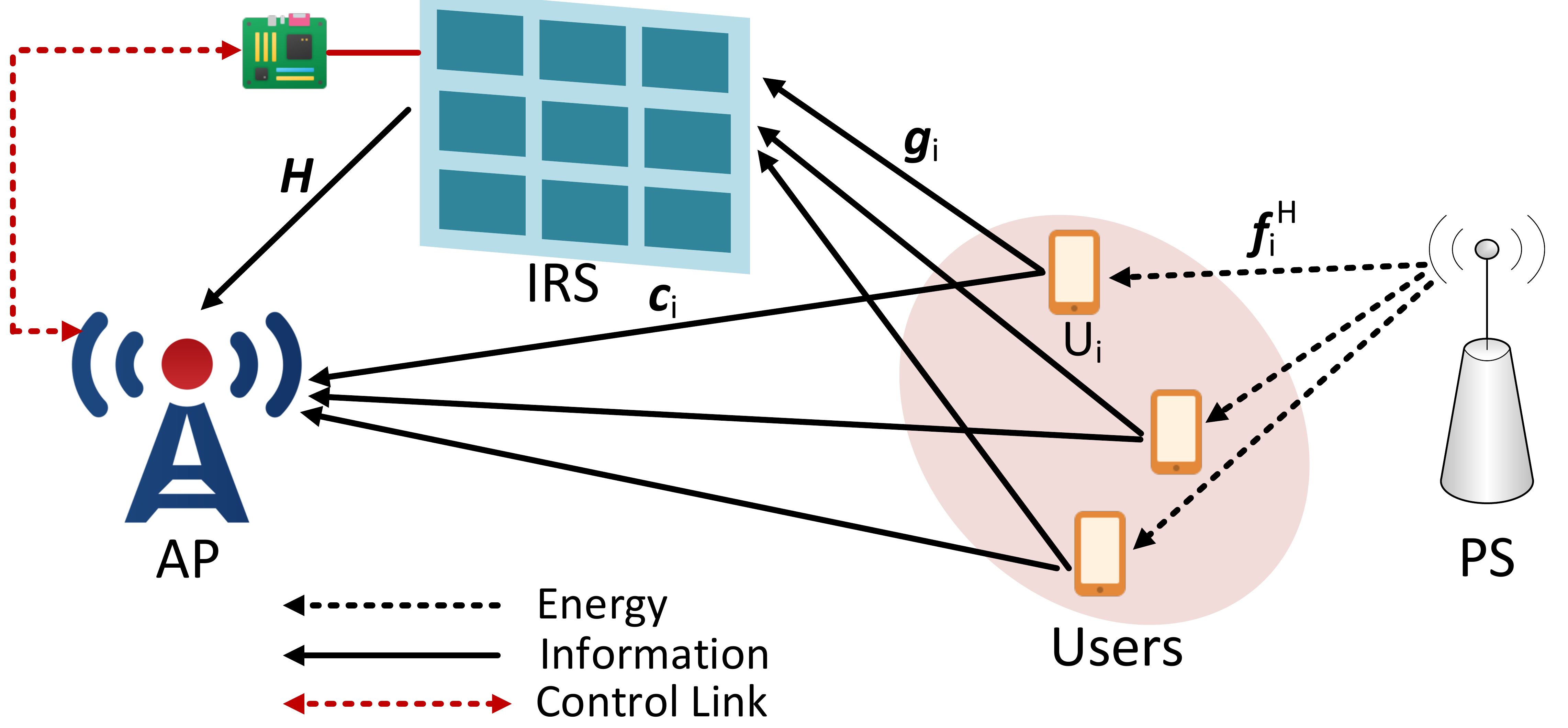}
\caption{ An IRS-empowered BS-WPCN}
	\label{fig1}
	
\end{figure}

\subsection{Organization}
\textit{Organization:}
We present the system model and formulate the throughput maximization problem in Section II. The two-stage algorithm for solving the formulated problem is elaborated in Sections III and IV. Numerical simulations for evaluating the performance of the proposed algorithms are presented in Section V, and Section VI concludes the paper.
\begin{figure*}[t!]

\centering
\includegraphics[width=.28\textwidth]{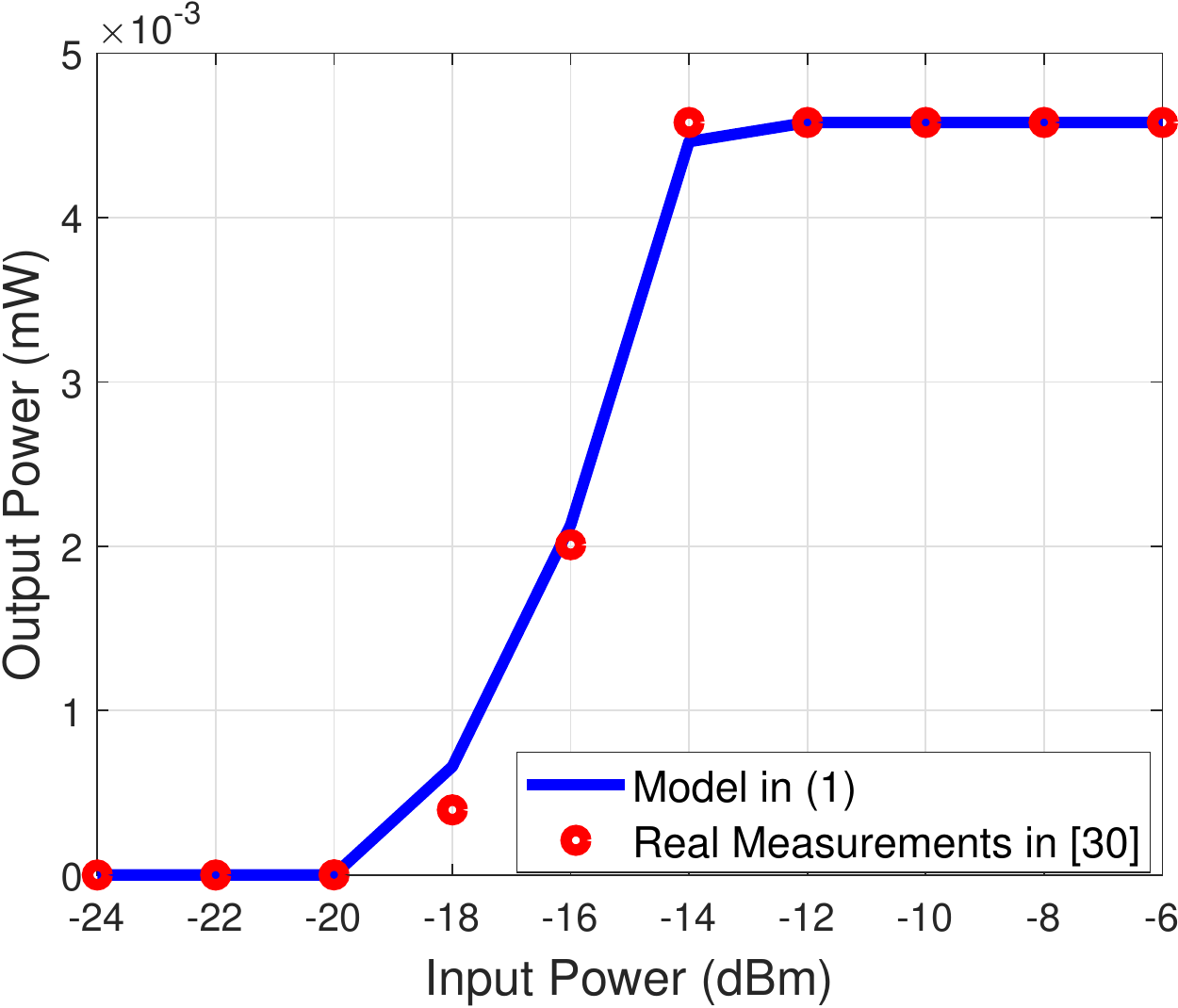}\hfill
\includegraphics[width=.28\textwidth]{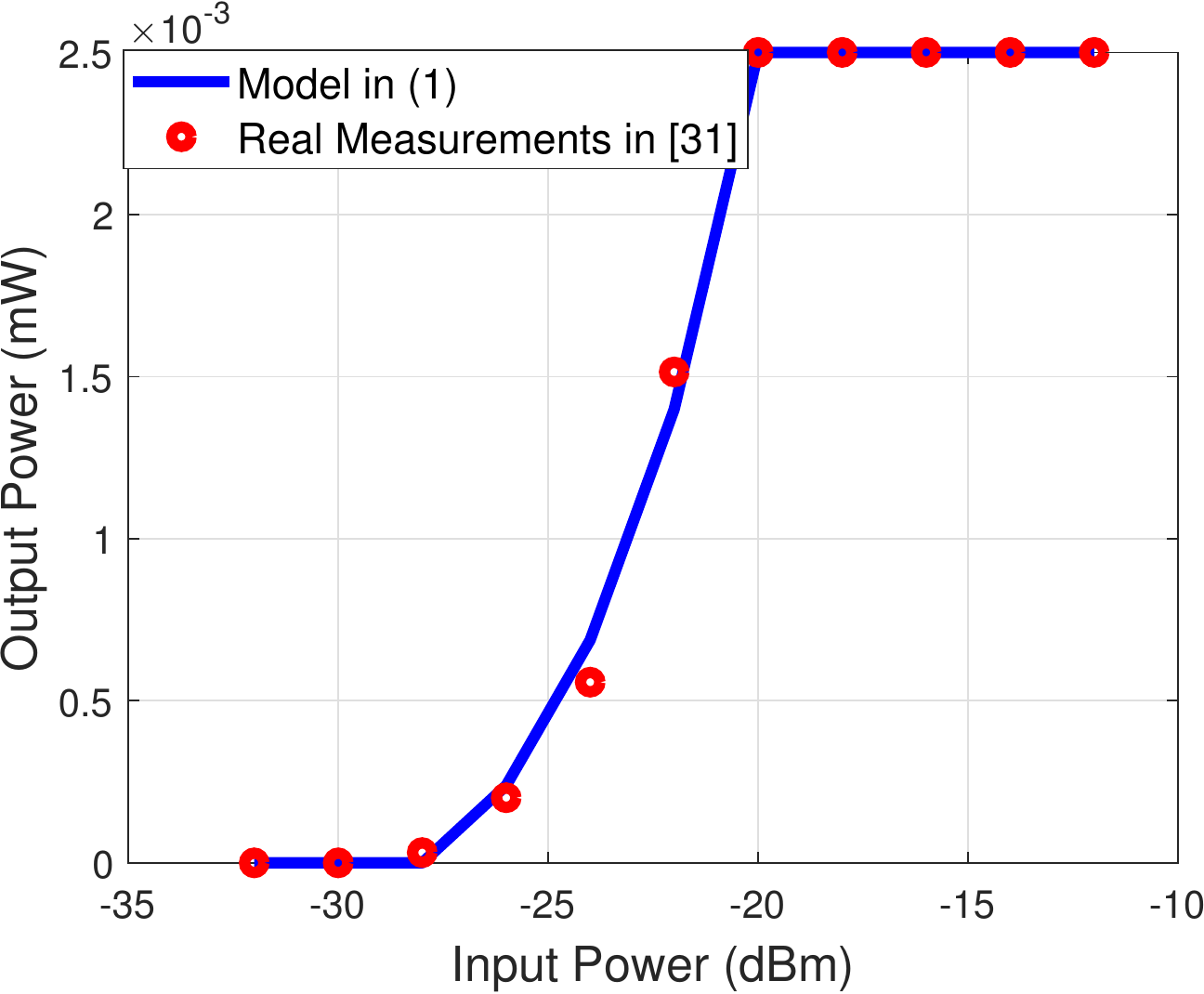}\hfill
\includegraphics[width=.28\textwidth]{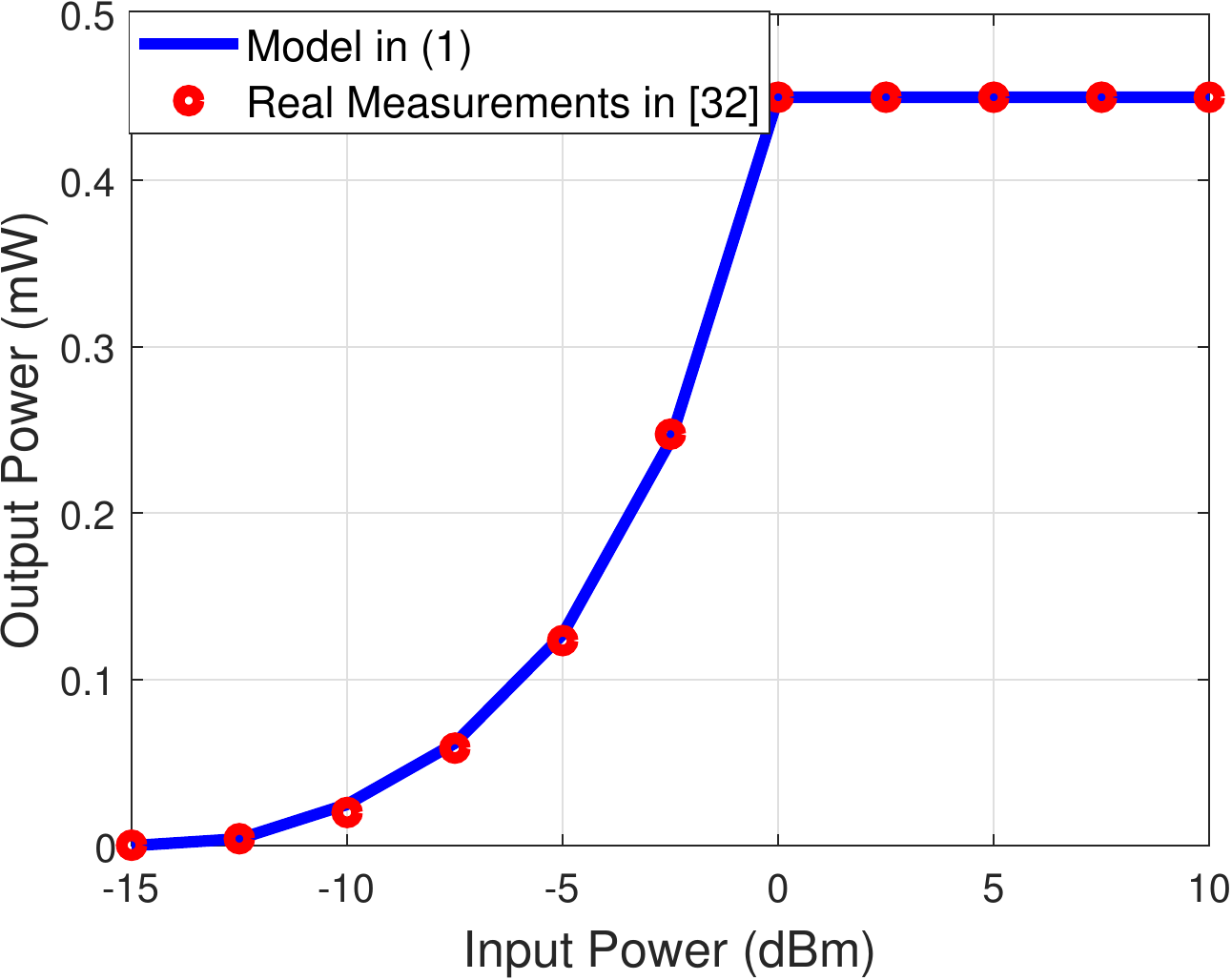}\hfill

\caption{Comparison between the model in (1) and real measurements in \cite{real_data1,real_data2,real_data3}}.
\label{curve}

\end{figure*}

\section{System Model and Problem Formulation}
We consider a multi-user BS-WPCN, as shown in Fig. \ref{fig1}, where the users are wirelessly powered by a PS to transmit their information to an AP. In addition to collecting energy from the PS transmissions, the users also backscatter the received signals from the PS to transmit information to the AP in a TDMA manner. The harvested energy will then be used by the users to actively transmit information signals to the AP. The backscatter and active IT of the users are aided by IRS elements which induce amplitude and phase changes to the signals transmitted by the users such that the signals from different paths are constructively added at the AP.

The PS and AP are equipped with $M_P$ and $M_A$ antennas, respectively, while each user has one single antenna. Also, the IRS consists of $N$ reflecting elements. We denote by $\mathcal{K}=\{1,...,K\}$ and $\mathcal{N}=\{1,...,N\}$ the set of all users and IRS reflecting elements, respectively. $\boldsymbol{F}^H\in \mathbb{C}^{K \times M_P}$, $\boldsymbol{G}\in \mathbb{C}^{N \times K}$, $\boldsymbol{C} \in \mathbb{C}^{M_A \times K}$, and $\boldsymbol{H} \in \mathbb{C}^{M_A \times N}$ respectively represent the PS-users, users-IRS, users-AP, and IRS-AP channel matrices. In this work, CSI is obtained based on the techniques proposed in \cite{irs-csi-1} 

In what follows, we present the system model including the EH model and the communication model, and formulate the sum-throughput maximization problem. 

\subsection{EH Model}
Most EH circuits use diode-based rectifiers to convert the received RF power into dc power. The efficiency of diode-based energy harvesters highly depends on the level of the received power. Specifically, when the received power is below the sensitivity of the energy harvester, the output power is zero because low input powers cannot turn on the diode. More importantly, when the received power exceeds some specific levels, the energy conversion efficiency is greatly degraded \cite{conv_eff1}; however, the output dc power remains constant and becomes saturated \cite{conv_eff2}. Taking into account the saturation effect is pivotal when studying networks with EH-enabled devices. Without considering this effect, the theoretical optimal design fails to perform properly in practice. Therefore, the conventional linear EH model cannot be relied on for studying the behavior of EH circuits. The sigmoidal EH model has been recently used by many researchers for designing resource allocation schemes in EH networks \cite{elena,bruno}. However, this model cannot be easily implemented in convex optimization toolboxes such as CVX \cite{cvx} without further approximations and transformations, especially when the received power varies over the harvesting period and the harvested power becomes a sum-of-ratios expression. Another EH model has been used in the literature which models the input-output power relationship of an energy harvester with a piece-wise linear function with two pieces, where the output power linearly increases with the input power up to the saturation point, beyond which the output power remains constant \cite{wpcn_relay4,bs2}. This model is more tractable than the sigmoidal model, however, it does not account for the sensitivity of EH circuits. 

In this paper, we use a piece-wise linear EH model with three pieces, where both sensitivity and saturation effects are taken into consideration. The harvested power in this model is given by 
\begin{align}
\label{our_pwl}
    p_h= \min\big(\max (0,\eta p_r - \xi), p_{\text{sat}}\big),
\end{align}where $p_h$ and $p_r$ are the harvested power and the received power, respectively, $p_{\text{sat}}$ is the saturation power and $\frac{\xi}{\eta}$ is the turn-on power (i.e., the sensitivity of the EH circuit). 

In Fig. \ref{curve}, we have plotted the curve fitting results for the piece-wise linear EH model in \eqref{our_pwl} using real measurements from \cite{real_data1,real_data2,real_data3}. The figure shows the good match between the model in \eqref{our_pwl} and real data, which confirms that this model is accurate enough for modeling the behavior of realistic EH circuits.  

\subsection{Communication Model}
The transmission block, which is depicted in Fig. \ref{fig2}, begins with PS acting as both energy and signal source for the users, transmitting unmodulated RF signals in the downlink.  Unlike the conventional WPCN, where the transmitted signals of the PS are merely used as a source of energy, the users of BS-WPCN make a more efficient use of these signals. In addition to harvesting their required energy, the users also modulate their information onto the signals and transmit them towards the AP using the backscattering technique. The backscatter transmission of the users is performed in a TDMA manner such that each user gets a dedicated time slot for performing backscatter transmission to the AP. Specifically, time slot $i$ of duration $\tau_i$ is allocated to backscatter transmission of the $i$-th user, denoted as $U_i$, while other users harvest energy from the signal transmitted by the PS. During $\tau_{K+1}$\footnote{Henceforth, time slots are represented by their duration, i.e., by $\tau_i$ we mean time slot $i$.}, all users simultaneously transmit to the AP, using their previously harvested energy. Both backscatter and active IT of the users are aided by IRS elements which apply phase and amplitude adjustments to the incident signals and reflect them to the AP. The real-time amplitude and phase adjustment of IRS elements are performed by a smart controller which is powered by a stable energy source (e.g., battery or grid). This model can be easily extended to the scenario where IRS elements are enabled to harvest their required energy from existing RF sources (e.g., AP, PS, etc.) by assuming that a portion of the transmission block (e.g., $\tau_0$) is dedicated for the EH of IRS.

During $\tau_i,~\forall i \in \mathcal{K}$, the PS transmits $\boldsymbol{w}_i \hat{s}_i$ in the downlink, where $\boldsymbol{w}_i$ is the beamforming vector of the PS and $\hat{s}_i$ is an unmodulated signal with unit power. The received signal at $U_i$ is given by $ \boldsymbol{f}_i^H \boldsymbol{w}_i \hat{s}_i$\footnote{We ignore noise at the users because no signal processing is performed.}, where $\boldsymbol{f}_i^H$ is the $i$-th row of $\boldsymbol{F}^H$. $U_i$ uses a portion of the received signal power for powering its circuit operations during backscatter transmission and reflects the remaining portion for transmitting information to the AP. Denote by $\beta_i$ the backscatter coefficient of $U_i$. The transmitted signal of $U_i$ during $\tau_i$ is given by $x_{1,i}=\sqrt{\beta_i}\boldsymbol{f}_i^H \boldsymbol{w}_i s_{1,i}\hat{s}_i$,    
where $s_{1,i}$ is the information-bearing signal of $U_i$ with $\mathbb{E}\{|s_{1,i}|^2\}=1$ and $\mathbb{E}[\cdot]$ shows the expected value. Each IRS element applies amplitude and phase changes to the received signal and reflects it to the AP. The received signal at the AP during $\tau_i,~\forall i \in \mathcal{K}$ is obtained as 

\begin{figure}[t!]
\centering
\includegraphics[width=2.8in]{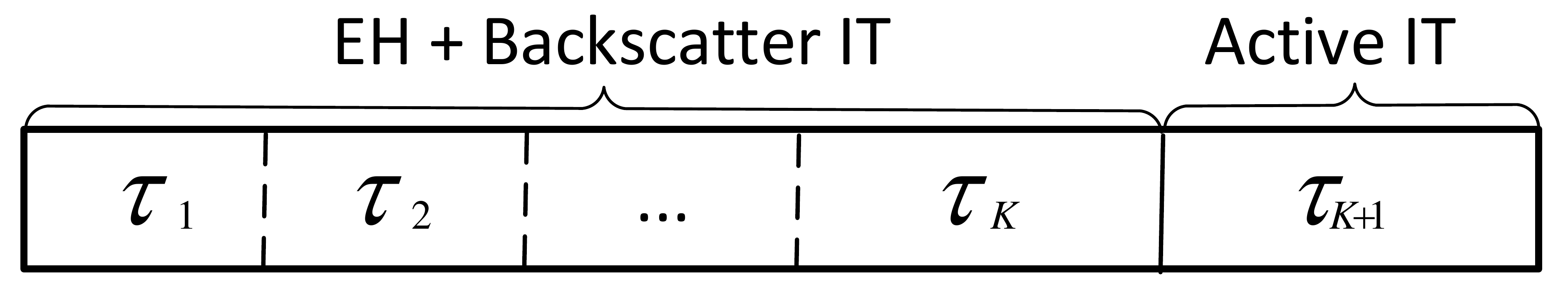}
\caption{ Transmission block structure for the IRS-empowered BS-WPCN}
	\label{fig2}
	
\end{figure}
 
\begin{align}
\label{y1}
    \boldsymbol{y}_{1,i}=(\boldsymbol{H}\boldsymbol{\Theta}_i\boldsymbol{g}_i+\boldsymbol{c}_i)x_{1,i}+\boldsymbol{n},
\end{align}where $\boldsymbol{g}_i$ and $\boldsymbol{c}_i$ are the $i$-th columns of $\boldsymbol{G}$ and $\boldsymbol{C}$, respectively. $\boldsymbol{\Theta}_i$ is the diagonal reflection matrix of the IRS during $\tau_i$ and its $n$-th diagonal element is given by $\alpha_{n,i}\text{exp}(j\theta_{n,i})$ with $\alpha_{n,i}$ and $\theta_{n,i}$ respectively representing the amplitude reflection coefficient and phase shift applied by the $n$-th element of IRS during $\tau_i$, and $\boldsymbol{n} \sim \mathcal{CN}(0,\sigma^2\boldsymbol{I}_{M_A})$ is the additive white Gaussian noise (AWGN) at the AP, where $\mathcal{CN}(\boldsymbol{\nu} ,\boldsymbol{\Sigma})$  is a circularly symmetric complex Gaussian (CSCG) random vector with $\boldsymbol{\nu}$ being the mean vector and $\boldsymbol{\Sigma}$ indicating the covariance matrix.

From \eqref{y1} and after maximum ratio combining (MRC) at the AP, the signal-to-noise ratio (SNR) of $U_i$ is given by 
\begin{align}
    \gamma_{1,i}= \dfrac{\beta_i|\boldsymbol{f}_i^H\boldsymbol{w}_i|^2 ||\boldsymbol{h}_i(\boldsymbol{\Theta}_i)||^2}{\sigma^2},
\end{align}with $\boldsymbol{h}_i(\boldsymbol{\Theta}_i)=\boldsymbol{H}\boldsymbol{\Theta}_i\boldsymbol{g}_i+\boldsymbol{c}_i$ being the effective channel between AP and $U_i$ during $\tau_i$. The achievable throughput of $U_i$ during $\tau_i$ is obtained as 
\begin{align}
\label{R1i}
    R_{1,i}=\tau_i \log (1+\gamma_{1,i})=\tau_i \log (1+\dfrac{\beta_i|\boldsymbol{f}_i^H\boldsymbol{w}_i|^2 ||\boldsymbol{h}_i(\boldsymbol{\Theta}_i)||^2}{\sigma^2}).
\end{align}

While $U_i$ backscatters the transmitted signal of PS for information transfer, other users harvest energy from this signal. The harvested energy of $U_j~j\neq i$ during $\tau_i$ is given by 
\begin{align}
    e_{h,j,i}=\tau_i p_{h,j,i},
\end{align} where $p_{h,j,i}$ is the harvested power of $U_j$ during $\tau_i$ and is given by\footnote{Without loss of generality, we assume that the EH circuits of all users are similar such that the same EH parameters are used for all users. } 
\begin{align}
    p_{h,j,i}= \min\big(\max (0,\eta |\boldsymbol{f}_j^H \boldsymbol{w}_i|^2 - \xi), p_{\text{sat}}\big).
\end{align}

The users utilize their harvested energy for actively transmitting information to the AP.  Denote by $s_{2,i}$ the information signal of $U_i$ in $\tau_{K+1}$ and by $p_i$ the transmit power of $U_i$. The transmitted signal of $U_i$ during $\tau_{K+1}$ is then given by $x_{2,i}=\sqrt{p_i}s_{2,i}$. The transmitted signal of the users is again reflected by the IRS which applies the reflection matrix $\boldsymbol{\Theta}_{K+1}$ to the signal and transmits it to the AP. The received signal at the AP is given by 
\begin{align}
\boldsymbol{y}_2=\sum_{i=1}^{K} \boldsymbol{h}_i (\boldsymbol{\Theta}_{K+1}) x_{2,i}+\boldsymbol{n},    
\end{align}where $\boldsymbol{h}_i (\boldsymbol{\Theta}_{K+1})=\boldsymbol{H}\boldsymbol{\Theta}_{K+1}\boldsymbol{g}_i+\boldsymbol{c}_i$ and $\boldsymbol{\Theta}_{K+1}$ is defined similar to $\boldsymbol{\Theta}_i$ with $\alpha_{n,i}$ and $\theta_{n,i}$ being replaced by $\alpha_{n,K+1}$ and $\theta_{n,K+1},~\forall n \in \mathcal{N}$. 

The AP applies the receive beamforming vector $\boldsymbol{a}_{i}$ for decoding the signal of $U_i$. The SINR and achieveable throughput of $U_i$ for active IT are respectively obtained as  

\begin{align}
\label{gamma2}
    \gamma_{2,i}&=\dfrac{p_{i}|\boldsymbol{a}_{i}^H \boldsymbol{h}_i(\boldsymbol{\Theta}_{K+1})|^2}{\sum_{j \neq i} p_{j}|\boldsymbol{a}_{i}^H \boldsymbol{h}_j(\boldsymbol{\Theta}_{K+1})|^2 + ||\boldsymbol{a}_{i}^H||^2\sigma^2},\\
  \label{R2i} R_{2,i}&=\tau_{K+1}\log(1+\gamma_{2,i})\notag \\&=\tau_{K+1}\log\Big(1+\dfrac{p_{i}|\boldsymbol{a}_{i}^H \boldsymbol{h}_i(\boldsymbol{\Theta}_{K+1})|^2}{\sum_{j \neq i} p_{j}|\boldsymbol{a}_{i}^H \boldsymbol{h}_j(\boldsymbol{\Theta}_{K+1})|^2 + ||\boldsymbol{a}_{i}^H||^2\sigma^2}\Big).
\end{align}

\subsection{Problem Formulation}
We aim to maximize the total throughput of the network by optimizing IRS reflection coefficients in all time slots, the transmit beamforming at the PS, the receive beamforming at the AP, the power allocation of the users for active IT, and the time allocation of the network\footnote{The AP serves as a central node which has the required resources in terms of energy and processing capacity for implementing the algorithms and performing all the necessary computations. The optimized values of the variables are sent to IRS, users, and PS via dedicated control links. There also exist control links from IRS to the users and the PS. If the link between the AP and the users or between the AP and the PS fails, the control information can be exchanged through the IRS.}. The sum-throughput maximization problem is thus formulated as 
\begin{align}
    \label{P}
    \max_{\substack{\{\boldsymbol{\Theta}_{i}\}_{i=1}^K,\boldsymbol{\Theta}_{K+1}\\ \{\boldsymbol{w}_i\}_{i=1}^K, \{\boldsymbol{a}_i\}_{i=1}^K,\boldsymbol{p},\boldsymbol{\tau} }}~&\sum_{i=1}^K R_{1,i} + R_{2,i}\\
\text{s.t.}~\label{Avg}&\sum_{i=1}^{K}\tau_i\text{Tr}(\boldsymbol{w}_i \boldsymbol{w}_i^H)\leq p_{\text{avg}}, \tag{10.a}\\
\label{peak}& \text{Tr}(\boldsymbol{w}_i \boldsymbol{w}_i^H) \leq p_{\text{peak}},~~\forall i \in \mathcal{K}, \tag{10.b}\\
\label{EC}&(p_i+p_{c,i}) \tau_{K+1} \leq \sum_{j\neq i} e_{h,i,j},~\forall i \in \mathcal{K}, \tag{10.c}\\
\label{totaltime}&\sum_{i=1}^{K}\tau_i + \tau_{K+1} \leq 1, \tag{10.d}\\
\label{range}&p_i\geq 0,\forall i \in \mathcal{K}, ~\tau_{i} \geq 0,~\forall i \in \mathcal{K} \cup \{K+1\}, \tag{10.e}\\
\label{thetarange} &0  < \alpha_{n,i} \leq 1,~  0<\theta_{n,i}\leq 2\pi\notag \\ &~~~~~~~~~~~~~~~~~\forall n \in \mathcal{N},~i \in \mathcal{K} \cup \{K+1\}, \tag{10.f} 
\end{align}where $\boldsymbol{p}=[p_1,...,p_K]$ and $\boldsymbol{\tau}=[\tau_1,...,\tau_{K+1}]$. Also, $\cup$ is the union operator. Constraints \eqref{Avg} and \eqref{peak} account for the average power constraint and peak power constraint at the PS, respectively. Constraint \eqref{EC} is the energy causality constraint at the users with $p_{c,i}$ denoting the circuit power consumption at $U_i$ for active information transfer. Constraint \eqref{totaltime} is the total time constraint assuming a normalized transmission block. Constraints \eqref{range} and \eqref{thetarange} define the acceptable range for the optimization variables. The problem in \eqref{P} is non-convex and challenging to solve. In the following, we propose a two-stage scheme, where the IRS reflection coefficients for backscatter transmission are optimized in the first stage and the optimization of other variables is performed in the second stage. Note that the schemes proposed in this paper can also be used to solve the weighted sum-throughput maximization problem, where the weights can be arbitrarily set to indicate user priorities and to control user fairness.  

 \section{IRS Reflection for Backscatter Communication}

 \label{Ph1}
 The IRS reflection optimization problem for users' backscatter transmission is given by 
 \begin{align}
    \label{PIRS1}
    \max_{\substack{\{\boldsymbol{\Theta}_{i}\}_{i=1}^K}}~&\sum_{i=1}^K \tau_i \log (1+\hat{\beta_i} ||\boldsymbol{h}_i(\boldsymbol{\Theta}_i)||^2)\\
\text{s.t.}&~~ 0 < \alpha_{n,i} \leq 1,~  0<\theta_{n,i}\leq 2\pi, ~\label{alphthet}\forall n \in \mathcal{N},~i \in \mathcal{K}, \tag{11.a}
\end{align}where $\hat{\beta_i}=\frac{\beta_i |\boldsymbol{f}_i^H\boldsymbol{w}_i|^2}{\sigma^2}$.

We can split the problem in \eqref{PIRS1} into $K$ separate sub-problems, each dealing with the optimization of the IRS reflection coefficients in one of the time slots. Specifically, the $i$-th sub-problem will be
 \begin{align}
    \label{Phase1}
    \max_{\substack{\boldsymbol{\Theta}_{i}}}~&||\boldsymbol{h}_i(\boldsymbol{\Theta}_i)||^2=||\boldsymbol{H}\boldsymbol{\Theta}_i\boldsymbol{g}_i+\boldsymbol{c}_i||^2 \\
    \text{s.t.}~&~~ 0 < \alpha_{n,i} \leq 1,~  0<\theta_{n,i}\leq 2\pi,~\forall n \in \mathcal{N}. \tag{12.a}
 \end{align}
 
 Defining $\boldsymbol{V}_i=\boldsymbol{H}\text{diag}(\boldsymbol{g}_i)$ and $\boldsymbol{u}_i=[u_{1,i},...,u_{N,i}]^T$ with $u_{n,i}=\alpha_{n,i}\exp(j\theta_{n,i}),~\forall n \in \mathcal{N}$, we have \begin{align}
     ||\boldsymbol{H}\boldsymbol{\Theta}_i\boldsymbol{g}_i+\boldsymbol{c}_i||^2 &=(\boldsymbol{V}_i \boldsymbol{u}_i + \boldsymbol{c}_i)^H (\boldsymbol{V}_i \boldsymbol{u}_i + \boldsymbol{c}_i)
     \notag \\&=\boldsymbol{u}_i^H \boldsymbol{V}_i^H \boldsymbol{V}_i \boldsymbol{u}_i+2\text{Re}\{\boldsymbol{u}_i^H \boldsymbol{V}_i^H \boldsymbol{c}_i\} +||\boldsymbol{c}_i||^2,
 \end{align} and problem \eqref{Phase1} can be re-written as 
  \begin{align}
\label{Phase1p}
    \max_{\substack{\boldsymbol{u}_{i}}}~&\boldsymbol{u}_i^H \boldsymbol{V}_i^H \boldsymbol{V}_i \boldsymbol{u}_i+2\text{Re}\{\boldsymbol{u}_i^H \boldsymbol{V}_i^H \boldsymbol{c}_i\} +||\boldsymbol{c}_i||^2 \\
    \text{s.t.}~\label{uin}& |u_{n,i}| \leq 1,~\forall n \in \mathcal{N}. \tag{14.a}
 \end{align}
 
 Problem \eqref{Phase1p} is a non-convex optimization problem because of the quadratic term  $\boldsymbol{u}_i^H \boldsymbol{V}_i^H \boldsymbol{V}_i \boldsymbol{u}_i$ in the objective function. In the following, we propose two methods for solving problem \eqref{Phase1p}. In the first method, we alternately optimize the reflection coefficient of one IRS element having other reflection coefficients fixed, in an iterative manner. The second method is based on the SA technique, where a lower-bound of the objective function is maximized iteratively until convergence is achieved. 
\subsection{AO-Based Design} 
 Expanding the first two terms of the objective function in \eqref{Phase1p}, we have 
 \begin{small}\begin{align}
 \label{expansion}
     &\boldsymbol{u}_i^H\boldsymbol{V}_i^H \boldsymbol{V}_i \boldsymbol{u}_i+2\text{Re}\{\boldsymbol{u}_i^H \boldsymbol{V}_i^H\boldsymbol{c}_i\}=\sum_{n=1}^N \bigg(|u_{n,i}|^2  \sum_{m=1}^{M_A}|[\boldsymbol{V}_i]_{m,n}|^2+ \notag \\ &2\text{Re}\Big\{u_{n,i}\Big(\sum_{q=n+1}^N\overline{u}_{q,i}(\sum_{m=1}^{M_A} [\boldsymbol{V}_i]_{m,n}[\boldsymbol{V}^H_i]_{q,m})+ \sum_{m=1}^{M_A}\overline{c}_{m,i} [\boldsymbol{V}_i]_{m,n}\Big)  \Big\} \bigg),
 \end{align}\end{small}where $\overline{x}$ indicates the conjugate of $x$, $[\boldsymbol{V}_i]_{x,y}$ is the element on the $x$-th row and $y$-th column, and $c_{m,i}$ is the $m$-th element of $\boldsymbol{c}_i$.
 
The objective is to alternately optimize the reflection coefficients. The optimization problem for the $n$-th ($\forall n \in \mathcal{N}$) reflection coefficient in the $i$-th ($\forall i \in \mathcal{K}$) time slot is thus formulated as 

\begin{align}
\label{opt_uni}
   \max_{\substack{u_{n,i}}}~&t_{n,i}|u_{n,i}|^2+2\text{Re}\{z_{n,i}u_{n,i}\}\\
    \text{s.t.}~&|u_{n,i}| \leq 1, \notag  
\end{align}where
\begin{align}
    &t_{n,i}=\sum_{m=1}^{M_A}|[\boldsymbol{V}_i]_{m,n}|^2\notag\\&z_{n,i}= \sum_{\substack{q=1 \\ q\neq n}}^N \overline{u}_{q,i}(\sum_{m=1}^{M_A} [\boldsymbol{V}_i]_{m,n}[\boldsymbol{V}^H_i]_{q,m})+ \sum_{m=1}^{M_A}\overline{c}_{m,i} [\boldsymbol{V}_i]_{m,n},\notag
\end{align}and the terms independent of $u_{n,i}$ have been discarded. The optimal solution to \eqref{opt_uni} is readily obtained as 
\begin{align}
\label{ao-based}
    u_{n,i}=e^{-j\text{arg}(z_{n,i})}.
\end{align}
 
 Algorithm \ref{alg_zero} describes the above alternating procedure for optimizing the IRS reflection coefficients.
 \begin{center}
\begin{algorithm}
\SetAlgoLined
{\textbf{Inputs:} $\boldsymbol{H},\boldsymbol{g}_i,\boldsymbol{c}_i,\forall i \in \mathcal{K}$\;}
{\textbf{Outputs:} $\boldsymbol{\Theta}_i,~\forall i \in \mathcal{K}$\;}
{Set $u_{n,i}^{(0)}=1,\forall n \in \mathcal{N},i \in \mathcal{K}$\;}
 \For{i=1:K}{
  {$\Delta=1,l=0$\;}
  \While{$\Delta > \epsilon$}{
  {$l=l+1$\;}
  \For {n=1:N}{
  Given $u_{n^{\prime},i}^{(l)}$ for $n^{\prime}=1,...,n-1$ and $u_{n^{\prime}}^{(l-1)}$ for $n^{\prime}=n+1,...,N$, update $z_{n,i}$ \;
  Find $u_{n,i}^{(l)}$ from \eqref{ao-based}\;}
  $\Delta=||\boldsymbol{u}_i^{(l)}-\boldsymbol{u}_i^{(l-1)}||$\;}
   Set $\alpha_{n,i}^*=|u_{n,i}^{(l)}|$ and $\theta_{n,i}^*=\text{arg}(u_{n,i}^{(l)}),~ \forall n \in \mathcal{N}$\;
  }
 \caption{\begin{small}AO-Based Optimization of IRS Reflection Coefficients for Backscatter IT\end{small}} \label{alg_zero}
\end{algorithm}\end{center}

\subsection{SA-Based Design}

 As mentioned earlier in this section, the non-convexity of problem \eqref{Phase1p}  is due to the objective function being quadratic, which is convex in $\boldsymbol{u}_i$. At any feasible point $\boldsymbol{u}_i^{(0)}$, the term  $\boldsymbol{u}_i^H \boldsymbol{V}_i^H \boldsymbol{V}_i \boldsymbol{u}_i$ is lower-bounded by its first-order Taylor expansion as $2\text{Re}\{\boldsymbol{u}_i^H \boldsymbol{V}_i^H \boldsymbol{V}_i \boldsymbol{u}_i^{(0)}\} - \boldsymbol{u}_i^{(0)H} \boldsymbol{V}_i^H \boldsymbol{V}_i \boldsymbol{u}_i^{(0)}$. We can therefore apply the SA technique and iteratively maximize the lower-bound of the objective function in \eqref{Phase1p} until convergence. In iteration $l$, we will have the following optimization problem:
 \begin{align}
    \label{Phase1_SCA}
    \max_{\substack{\boldsymbol{u}_{i}}}~&2\text{Re}\{\boldsymbol{u}_i^H (\boldsymbol{V}_i^H \boldsymbol{V}_i \boldsymbol{u}_i^{(l-1)} + \boldsymbol{V}_i^H \boldsymbol{c}_i)\} + C_i^{(l-1)} \\
    \text{s.t.}~&\eqref{uin} \notag,
 \end{align}where $\boldsymbol{u}_i^{(l-1)}$ is the optimized $\boldsymbol{u}_i$ in the $(l-1)$-th iteration and $C_i^{(l-1)}=||\boldsymbol{c}_i||^2 - \boldsymbol{u}_i^{(l-1)H} \boldsymbol{V}_i^H \boldsymbol{V}_i \boldsymbol{u}_i^{(l-1)} $. Problem \eqref{Phase1_SCA} is equivalent to  
 \begin{align}
    \label{Phase1_Final}
    \max_{\substack{\boldsymbol{u}_i}}~&\text{Re}\big\{\sum_{n=1}^N 
     \overline{u}_{n,i}\Tilde{v}_{n,i}\big\} \\
    \text{s.t.}~&\eqref{uin} \notag,
 \end{align} where  $\Tilde{v}_{n,i}$ is the $n$-th element of $\Tilde{\boldsymbol{v}}_i=\boldsymbol{V}_i^H \boldsymbol{V}_i \boldsymbol{u}_i^{(l-1)} + \boldsymbol{V}_i^H \boldsymbol{c}_i$. It is straightforward to see that the optimal solution to problem \eqref{Phase1_Final} is given by
\begin{align}
\label{ref_phaseI}
    u_{n,i}=e^{j\text{arg}(\Tilde{v}_{n,i})}, ~\forall n \in \mathcal{N}.
\end{align}

The steps for optimizing the IRS reflection coefficients using the SA-based method is given in Algorithm \ref{alg_phase1}.

 \begin{center}\begin{algorithm}
 \SetAlgoLined
 {\textbf{Inputs:} $\boldsymbol{H},\boldsymbol{g}_i,\boldsymbol{c}_i,\forall i \in \mathcal{K}$\;}
{\textbf{Outputs:} $\boldsymbol{\Theta}_i,~\forall i \in \mathcal{K}$\;}
{Set $u_{n,i}^{(0)}=1,\forall n \in \mathcal{N},i \in \mathcal{K}$\;}
  \For{i=1:K}{
  {$\Delta=1,l=0$\;}
  \While{$\Delta > \epsilon$}{
  {$l=l+1$\;}
  {Calculate $\Tilde{\boldsymbol{v}}_i=\boldsymbol{V}_i^H \boldsymbol{V}_i \boldsymbol{u}_i^{(l-1)} + \boldsymbol{V}_i^H \boldsymbol{c}_i$\;}
  {Find $u_{n,i}^{(l)},\forall n \in \mathcal{N}$ from \eqref{ref_phaseI}\;}
  {$\Delta=||\boldsymbol{u}_i^{(l)}-\boldsymbol{u}_i^{(l-1)}||$\;}}
  { Set $\alpha_{n,i}^*=|u_{n,i}^{(l)}|$ and $\theta_{n,i}^*=\text{arg}(u_{n,i}^{(l)}),~ \forall n \in \mathcal{N}$.}
 }
  
\caption{\begin{small}SA-Based Optimization of IRS Reflection Coefficients for Backscatter IT\end{small}}
\label{alg_phase1}
\end{algorithm}\end{center}

\begin{remark}
\label{rem1}
 In Algorithm \ref{alg_zero},  although the optimization of $u_{n,i}$ in \eqref{opt_uni} ensures that the power of the combined signal from the direct path and the $n$-th reflected path is improved, it does not necessarily mean that the power of the collective signal received at the AP is also increased. In other words, the separate optimization of IRS reflection coefficients may fail to result in coherent combination of the individually reflected signals at the AP and the improvement of throughput over iterations cannot be ensured. The SA-based technique proposed in Algorithm \ref{alg_phase1} simultaneously updates the reflection coefficients of all IRS elements and thus, guarantees that the throughput increases after each update and a near-optimal solution for IRS reflection coefficients is obtained. Therefore, the SA-based method is expected to outperform the AO-based counterpart as will be shown in Section \ref{sims}. However, the complexity of Algorithm \ref{alg_zero} is lower than that of Algorithm \ref{alg_phase1} with  $\mathcal{O}(KNL_1)$ and $\mathcal{O}(KN^2 L_2)$ being the complexity orders of the AO-based and SA-based techniques, respectively, where $L_1$ and $L_2$ are the number of iterations needed for the convergence of the corresponding while loops. 
\end{remark}

As stated in Remark \ref{rem1}, the objective function is guaranteed to improve after each iteration with Algorithm \ref{alg_phase1}. Specifically, setting $\mathcal{F}(\boldsymbol{u}_i)=\boldsymbol{u}_i^H \boldsymbol{V}_i^H \boldsymbol{V}_i \boldsymbol{u}_i+2\text{Re}\{\boldsymbol{u}_i^H \boldsymbol{V}_i^H \boldsymbol{c}_i\} +||\boldsymbol{c}_i||^2$, we have

\begin{align}
\label{conv_proof}
 &\mathcal{F}(\boldsymbol{u}_i^{(l)})\overset{(\varpi_1)}{\geq} 2\text{Re}\{\boldsymbol{u}_i^{(l)H} (\boldsymbol{V}_i^H \boldsymbol{V}_i \boldsymbol{u}_i^{(l-1)} + \boldsymbol{V}_i^H \boldsymbol{c}_i)\} + C_i^{(l-1)}\notag\\&\overset{(\varpi_2)}{\geq} 2\text{Re}\{\boldsymbol{u}_i^{(l-1)H} (\boldsymbol{V}_i^H \boldsymbol{V}_i \boldsymbol{u}_i^{(l-1)} + \boldsymbol{V}_i^H \boldsymbol{c}_i)\} + C_i^{(l-1)}=\mathcal{F}(\boldsymbol{u}_i^{(l-1)})  \notag\\
 &\overset{(\varpi_3)}{\geq} 2\text{Re}\{\boldsymbol{u}_i^{(l-1)H} (\boldsymbol{V}_i^H \boldsymbol{V}_i \boldsymbol{u}_i^{(l-2)} + \boldsymbol{V}_i^H \boldsymbol{c}_i)\} + C_i^{(l-2)},~~\forall l,
\end{align}where $(\varpi_1)$ and $(\varpi_3)$ hold because $\mathcal{F}(\boldsymbol{u}_i)$ is lower-bounded by its first-order Taylor expansion, and $(\varpi_2)$ is due to the fact that $\boldsymbol{u}_i^{(l)}$ is the optimal solution to \eqref{Phase1_SCA} in the $l$-th iteration. Therefore, both the objective function in \eqref{Phase1p} and its lower-bound in \eqref{Phase1_SCA} increase after each iteration.

\section{Resource Allocation, Beamforming, and IRS Reflection for Active Communication}
In order to find the near-optimal design for the remaining optimization variables, the AO technique is used by dividing the variables into groups and alternately optimizing them in an iterative manner. The details will be elaborated in the following subsections.
\subsection{Design of Resource Allocation and Transmit Beamforming}
 We first investigate the optimization of resource allocation including power allocation at the users and time allocation for EH, backscatter IT, and active IT, as well as the transmit beamforming vectors of the PS. We have the following optimization problem: 

 \begin{align}
    \label{resource}
    \max_{\substack{\boldsymbol{p},\boldsymbol{\tau},\{\boldsymbol{w}_i\}_{i=1}^K}} ~&\sum_{i=1}^K \bigg(\tau_i \log \Big(1+\dfrac{\beta_i ||\boldsymbol{h}_i(\boldsymbol{\Theta}_i)||^2 |\boldsymbol{f}_i^H\boldsymbol{w}_i|^2 }{\sigma^2}\Big)\notag\\&+\tau_{K+1}\log \Big(1+\dfrac{p_{i}|\boldsymbol{a}_i^H \boldsymbol{h}_i(\boldsymbol{\Theta}_{K+1})|^2}{\sum_{j \neq i} p_{j}|\boldsymbol{a}_i^H \boldsymbol{h}_j(\boldsymbol{\Theta}_{K+1})|^2 + ||\boldsymbol{a}_i^H||^2\sigma^2}\Big)\bigg)\\
\text{s.t.}&~~~\eqref{Avg}-\eqref{range}. \notag
\end{align}

The above problem is not a convex optimization problem because the objective function is not concave and the variables are coupled in the objective function and the constraints. We define  $\Tilde{\boldsymbol{W}}_i = \tau_i \boldsymbol{w}_i \boldsymbol{w}_i^H$, $e_{i}= \tau_{K+1} p_{i}$ and introduce auxiliary matrix $\boldsymbol{\Phi}$. Problem \eqref{resource} is re-written as

\begin{align}
    \label{resource2}
    \max_{\substack{\boldsymbol{e},\boldsymbol{\tau}},\{\Tilde{\boldsymbol{W}}_i\}_{i=1}^K,\boldsymbol{\Phi}} ~&\sum_{i=1}^K \bigg(\tau_i \log \Big(1+\Tilde{\beta}_i\dfrac{\text{Tr}(\Tilde{\boldsymbol{F}}_i \Tilde{\boldsymbol{W}}_i) }{\tau_i}\Big)\notag\\&+\tau_{K+1}\log\Big(1+\dfrac{\Tilde{a}_{i,i} \dfrac{e_{i}}{\tau_{K+1}}}{\sum_{j \neq i} \Tilde{a}_{i,j}\dfrac{e_{j}}{\tau_{K+1}}+ \Tilde{c}_i}\Big)\bigg)\\
\text{s.t.}~\label{avg2}&\sum_{i=1}^{K}\text{Tr}(\Tilde{\boldsymbol{W}}_i)\leq p_{\text{avg}} \tag{23.a}\\
\label{peak2}& \text{Tr}(\Tilde{\boldsymbol{W}}_i) \leq \tau_i p_{\text{peak}},~~\forall i \in \mathcal{K} \tag{23.b}\\
\label{EC2}&e_i+p_{c,i} \tau_{K+1} \leq \sum_{j\neq i} [\boldsymbol{\Phi}]_{i,j},~\forall i \in \mathcal{K}\tag{23.c}\\
&[\boldsymbol{\Phi}]_{i,j}+\xi \tau_{j} \leq \eta \text{Tr}(\Tilde{\boldsymbol{F}}_i \Tilde{\boldsymbol{W}}_j), \forall i, j \in \mathcal{K},~j \neq i, \tag{23.d} \\
&[\boldsymbol{\Phi}]_{i,j} \leq \tau_j p_{\text{sat}}, \forall i,j \in \mathcal{K},~j \neq i, \tag{23.e}\\
\label{totaltime2}&\sum_{i=1}^{K}\tau_i + \tau_{K+1} \leq 1, \tag{23.f}\\
\label{range2}&e_i \geq 0, \forall i \in \mathcal{K},~~\tau_{i}\geq 0, \forall i \in \mathcal{K} \cup \{K+1\}, \tag{23.g}\\
\label{Wpos}&\Tilde{\boldsymbol{W}}_i \geq 0,~\forall i \in \mathcal{K}, \tag{23.h}\\
\label{rank}& \text{Rank}(\Tilde{\boldsymbol{W}}_i)= 1,~\forall i \in \mathcal{K}, \tag{23.i}
\end{align}where $\boldsymbol{e}=[e_1,...,e_K]$, $\Tilde{\boldsymbol{F}}_i=\boldsymbol{f}_i \boldsymbol{f}_i^H$, $\Tilde{\beta}_i=\frac{\beta_i ||\boldsymbol{h}_i(\boldsymbol{\Theta}_i)||^2}{\sigma^2}$, $\Tilde{a}_{i,j}=|\boldsymbol{a}_i^H\boldsymbol{h}_j(\boldsymbol{\Theta_{K+1}})|^2$, and $\Tilde{c}_i=||\boldsymbol{a}_i^H||^2 \sigma^2$. 

Problem \eqref{resource2} is still non-convex because the second term of the objective function is not concave and also the rank-one constraint in \eqref{rank} is not convex. To deal with the non-concavity in the objective function, we write the second term of \eqref{resource2} as a sum of concave and convex functions and apply the SA technique to iteratively maximize a lower bound of the objective function. Specifically, we have
\begin{align}
\label{concave-convex}
   &\tau_{K+1}\log(1+\dfrac{\Tilde{a}_{i,i} \dfrac{e_{i}}{\tau_{K+1}}}{\sum_{j \neq i} \Tilde{a}_{i,j}\dfrac{e_{j}}{\tau_{K+1}}+ \Tilde{c}_i})=\notag\\&\tau_{K+1}\log\big(\sum_{j=1}^K \Tilde{a}_{i,j}\dfrac{e_j}{\tau_{K+1}}+\Tilde{c}_i \big)-
   \tau_{K+1}\log\big(\sum_{\substack{j=1 \\  j\neq i}}^K \Tilde{a}_{i,j}\dfrac{e_j}{\tau_{K+1}}+\Tilde{c}_i \big).
\end{align}

In \eqref{concave-convex}, $\tau_{K+1}\log\big(\sum_{j=1}^K \Tilde{a}_{i,j}(e_j/\tau_{K+1})+\Tilde{c}_i\big)$ is a concave function since it is obtained by applying the perspective operation to the concave function $\log\big(\sum_{j=1}^K \Tilde{a}_{i,j}e_j+\Tilde{c}_i\big)$ and the perspective operation preserves concavity. On the other hand, $-\tau_{K+1}\log\big(\sum_{j\neq i} \Tilde{a}_{i,j}(e_j/\tau_{K+1})+\Tilde{c}_i\big)$ is a jointly convex function of $\tau_{K+1}$ and $\boldsymbol{e}$, which motivates us to use the SA technique for solving \eqref{resource2}. Based on the first-order Taylor series expansion, this convex function can be approximated by its lower bound as

\begin{small}\begin{align}
    &-\tau_{K+1}\log\big(\sum_{j\neq i} \Tilde{a}_{i,j}\dfrac{e_j}{\tau_{K+1}}+\Tilde{c}_i \big) \approx -\tau_{K+1}^{(0)}\log\big(\sum_{j\neq i} \Tilde{a}_{i,j}\dfrac{{e}_j^{(0)}}{{\tau_{K+1}^{(0)}}}+\Tilde{c}_i \big)\notag \\ & +\Big(\dfrac{\sum_{j \neq i} \Tilde{a}_{i,j}\dfrac{e^{(0)}_{j}}{\tau^{(0)}_{K+1}}}{\sum_{j \neq i} \Tilde{a}_{i,j}\dfrac{e^{(0)}_{j}}{\tau^{(0)}_{K+1}}+\Tilde{c}_i} - \log \big(\sum_{j \neq i} \Tilde{a}_{i,j}\dfrac{e^{(0)}_{j}}{\tau^{(0)}_{K+1}}+\Tilde{c}_i\big)\Big)(\tau_{K+1}-\tau^{(0)}_{K+1})\notag\\&+ \Big(-\dfrac{1}{\sum_{j \neq i} \Tilde{a}_{i,j}\dfrac{e^{(0)}_{j}}{\tau^{(0)}_{K+1}}+\Tilde{c}_i}\Big)\big(\sum_{j\neq i} \Tilde{a}_{i,j}(e_j-e^{(0)}_j)\big),
\end{align}\end{small}where $\tau^{(0)}_{K+1}$ and $e^{(0)}_j$ are feasible values for $\tau_{K+1}$ and $e_j$, respectively. Applying the SA method, the optimization problem in iteration $l$ can be formulated as follows:

\begin{align}
\label{resource3}
    \max_{\substack{\boldsymbol{e},\boldsymbol{\tau},\{\Tilde{\boldsymbol{W}}_i\}_{i=1}^K}} ~&R_{\text{new}}^{(l)}\\
\text{s.t.}~&~\eqref{avg2}-\eqref{rank} \notag
\end{align}where $R_{\text{new}}^{(l)}$ is given in \eqref{Rnew} and superscript $(l-1)$ indicates the optimized value in iteration $(l-1)$. 
\begin{align}
\label{Rnew}
&R_{\text{new}}^{(l)}=\notag\\&\sum_{i=1}^K \bigg(\tau_i \log \big(1+\Tilde{\beta}_i\dfrac{\text{Tr}(\Tilde{\boldsymbol{F}}_i \Tilde{\boldsymbol{W}}_i) }{\tau_i}\big)+\tau_{K+1}\log\big(1+\sum_{j=1}^K \dfrac{\Tilde{a}_{i,j}}{\Tilde{c}_i}\dfrac{e_j}{\tau_{K+1}}\big)\notag\\&+\delta_{e,i}^{(l-1)}(\sum_{j \neq i}\Tilde{a}_{i,j}e_j)+\delta_{\tau,i}^{(l-1)}\tau_{K+1}\bigg),
\end{align}
\begin{small}\begin{align}
    &\delta_{e,i}^{(l-1)}=-\dfrac{1}{\sum_{j \neq i} \Tilde{a}_{i,j}\dfrac{{e}_{j}^{(l-1)}}{\tau_{K+1}^{(l-1)}}+\Tilde{c}_i},\notag\\
    &\delta_{\tau,i}^{(l-1)}=\dfrac{\sum_{j \neq i} \Tilde{a}_{i,j}\dfrac{e_{j}^{(l-1)}}{\tau_{K+1}^{(l-1)}}}{\sum_{j \neq i} \Tilde{a}_{i,j}\dfrac{{e}_{j}^{(l-1)}}{\tau_{K+1}^{(l-1)}}+\Tilde{c}_i} - \log \big(\sum_{j \neq i} \Tilde{a}_{i,j}\dfrac{e_{j}^{(l-1)}}{\tau_{K+1}^{(l-1)}}+\Tilde{c}_i\big)+\log(\Tilde{c}_i). \notag
\end{align}\end{small}

Now,  the only source of non-convexity for problem \eqref{resource3} is the rank-one constraint in \eqref{rank}, which can be relaxed using the semidefinite relaxation (SDR) technique. The relaxed problem will be a convex optimization problem, which can be solved by convex optimization toolboxes (e.g., CVX). 
 We iteratively maximize the lower bound of the total throughput by solving the relaxed version of \eqref{resource3} until a satisfactory convergence is achieved. In problems \eqref{resource2} and \eqref{resource3}, we have ignored the sensitivity of the EH circuits at the users. Therefore, the obtained value for some of $[\boldsymbol{\Phi}]_{i,j}$'s may be negative which is not acceptable. We thus have to perform one final step to find energy and time allocation. Setting $[\Tilde{\boldsymbol{\Phi}}]_{i,j}=\max (0,\boldsymbol{\Phi}_{i,j}),~\forall i,j \in \mathcal{K},~j\neq i$, we will have the following problem for optimizing energy and time:

\begin{align}
\label{resource4}
    \max_{\substack{\boldsymbol{e},\boldsymbol{\tau}}} ~&R_{\text{new}}^{(l)},\\
\text{s.t.}~&\eqref{totaltime2} \text{ and } \eqref{range2}, \notag \\
&\text{Tr}(\Tilde{\boldsymbol{W}}_i^*) \leq \tau_i p_{\text{peak}},~\forall i \in \mathcal{K}, \tag{28.a}\\
&~e_i + p_{c,i}\tau_{K+1} \leq \sum_{j \neq i} [\Tilde{\boldsymbol{\Phi}}]_{i,j},~\forall \in \mathcal{K}, \tag{28.b}
\end{align}where $\Tilde{\boldsymbol{W}}_i^*$ is the solution obtained for $\Tilde{\boldsymbol{W}}_i$ by solving problem \eqref{resource3}.  Problem \eqref{resource4} is a convex problem and can be solved either analytically or by CVX.  After iteratively solving problem \eqref{resource4}, the near-optimal energy and time allocation are obtained, which are denoted as $\boldsymbol{e}^*$ and $\boldsymbol{\tau}^*$. The near-optimal power allocation is then calculated as $p_i^*=e_i^*/\tau_{K+1}^*,~\forall i \in \mathcal{K}$.
 
Finally, if $\Tilde{\boldsymbol{W}}_i^{*}$ is not a rank-one matrix, we use the Eigen-decomposition technique to extract a feasible rank-one solution from it. Particularly, the rank-one approximation for $\Tilde{\boldsymbol{W}}_i^{*}$ is given by $\hat{\boldsymbol{W}}_i=\lambda_{i,1} \hat{\boldsymbol{w}}_{i,1} \hat{\boldsymbol{w}}_{i,1}^H$, where $\lambda_{i,1}$ is the largest eigenvalue of $\Tilde{\boldsymbol{W}}_i^{*}$ and 
$\hat{\boldsymbol{w}}_{i,1}$ is the corresponding eigenvector. The sub-optimal beamforming vector of the PS in the $i$-th time slot will be obtained as $\boldsymbol{w}_i^{*}=\sqrt{\lambda_{i,1}/\tau_i^{*}}\hat{\boldsymbol{w}}_{i,1}$. Algorithm \ref{alg_resource} summarizes the procedure for optimizing the network resource allocation and transmit beamforming vectors of the PS. 
\begin{center}
\begin{algorithm}
 \SetAlgoLined
{\textbf{Inputs:} $\boldsymbol{H},\boldsymbol{g}_i,\boldsymbol{c}_i,\boldsymbol{f}_i,\boldsymbol{a}_i, p_{c,i},\forall i \in \mathcal{K},\sigma^2,p_{\text{avg}},p_{\text{peak}},p_{\text{sat}},\eta,\xi$\;}
{\textbf{Outputs:} $\boldsymbol{p},\boldsymbol{\tau}, \boldsymbol{w}_i,~\forall i \in \mathcal{K}$\;}
{$\Delta=1,l=0, R_{\text{new}}^{(0)}=0$\;}
{Initialize $\boldsymbol{e}^{(0)}$ and $\boldsymbol{\tau}^{(0)}$\;}
 \While{$\Delta > \epsilon$}{
    {$l=l+1$\;}
         {Update $\delta_{e,i}^{(l-1)}$ and $\delta_{\tau,i}^{(l-1)}, \forall i \in \mathcal{K}$\;}
         {Solve \eqref{resource3} using CVX\;}
         { $\Delta=|R_{\text{new}}^{(l)}-R_{\text{new}}^{(l-1)}|$\;}}
          {Set $\Tilde{\boldsymbol{W}}_i^*=\Tilde{\boldsymbol{W}}_i^{(l)}, \forall i \in \mathcal{K}$\;}
          {Set $[\Tilde{\boldsymbol{\Phi}}]_{i,j}=\max (0,[\boldsymbol{\Phi}]_{i,j}),~\forall i,j \in \mathcal{K},~j\neq i$\;}
         {$\Delta=1,l=0, R_{\text{new}}^{(0)}=0$\;}
{Initialize $\boldsymbol{e}^{(0)}$ and $\boldsymbol{\tau}^{(0)}$\;}
 \While{$\Delta > \epsilon$}{
    {$l=l+1$\;}
         {Update $\delta_{e,i}^{(l-1)}$ and $\delta_{\tau,i}^{(l-1)}, \forall i \in \mathcal{K}$\;}
         {Solve \eqref{resource4}\;}
         { $\Delta=|R_{\text{new}}^{(l)}-R_{\text{new}}^{(l-1)}|$\;}}
 {Use Eigen-decomposition technique to obtain a feasible rank-one  $\hat{\boldsymbol{W}}_i$ from $\Tilde{\boldsymbol{W}}_i^{*},~ \forall i \in \mathcal{K}$\;}
  {Set $e_i^{*}=e_i^{(l)},\forall i\in \mathcal{K}, \tau_i^{*}=\tau_{i}^{(l)},\forall i \in \mathcal{K}, \tau_{K+1}^*=\tau_{K+1}^{(l)}, p_i^*=\frac{e_i^*}{\tau_{K+1}^*}, \forall i \in \mathcal{K}, \boldsymbol{w}_i^*=\sqrt{\lambda_{i,1}/\tau_i^*} \hat{\boldsymbol{w}}_{i,1},\forall i \in \mathcal{K}$\;}
\caption{\begin{small}Optimization of Resource Allocation and Transmit Beamforming \end{small}}
\label{alg_resource}
\end{algorithm}\end{center}
\subsection{Design of IRS Reflection for Active IT and Receive Beamforming}
 Having fixed the receive beamforming at the AP, the problem for optimizing the IRS reflection during $\tau_{K+1}$ is formulated as 
 \begin{align}
    \label{PhaseII}
    \max_{\substack{\boldsymbol{\Theta}_{K+1}}}~&\sum_{i=1}^K \bigg(\log\Big(1+\dfrac{p_{i}|\boldsymbol{a}_{i}^H \boldsymbol{h}_i(\boldsymbol{\Theta}_{K+1})|^2}{\sum_{j \neq i} p_{j}|\boldsymbol{a}_{i}^H \boldsymbol{h}_j(\boldsymbol{\Theta}_{K+1})|^2 + ||\boldsymbol{a}_{i}^H||^2\sigma^2}\Big)\bigg)\\
\text{s.t.}&~~\label{ThetaK+1} 0 < \alpha_{n,K+1} \leq 1,~  0 <\theta_{n,K+1}\leq 2\pi,~~\forall n \in \mathcal{N}. \tag{29.a}
\end{align}

Problem \eqref{PhaseII} is not a convex optimization problem and finding its optimal solution is not straightforward. A sub-optimal solution may be obtained by using the techniques of SDR and SA as will be briefly discussed in the following.

Using the previously defined $\boldsymbol{V}_i,~\forall i \in \mathcal{K}$ and setting $\boldsymbol{u}_{K+1}=[u_{1,K+1},...,u_{N,K+1}]^T$ with $u_{n,K+1}=\alpha_{n,K+1}\exp(j\theta_{n,K+1})$, $\Tilde{\boldsymbol{u}}_{K+1}=[\boldsymbol{u}_{K+1}^T 1]^T$, and  $\Tilde{\boldsymbol{U}}_{K+1}=\Tilde{\boldsymbol{u}}_{K+1}\Tilde{\boldsymbol{u}}_{K+1}^H$, and after dropping the rank-one constraint on $\Tilde{\boldsymbol{U}}_{K+1}$, problem \eqref{PhaseII}
is re-formulated as 
\begin{align}
    \label{PhaseII_2}
    \max_{\substack{\Tilde{\boldsymbol{U}}_{K+1}}}~&\sum_{i=1}^K \bigg(\log\Big(1+\dfrac{\text{Tr}(\boldsymbol{A}_{i,i}\Tilde{\boldsymbol{U}}_{K+1})}{\text{Tr}(\Tilde{\boldsymbol{A}}_i\Tilde{\boldsymbol{U}}_{K+1}) + ||\boldsymbol{a}_{i}^H||^2\sigma^2}\Big)\bigg)\\
\text{s.t.}&~\label{ubar1}[\Tilde{\boldsymbol{U}}_{K+1}]_{n,n} \leq 1, \forall n \in \mathcal{N},  \tag{30.a}\\&\label{ubar2}[\Tilde{\boldsymbol{U}}_{K+1}]_{N+1,N+1}=1 \tag{30.b},
\end{align}where
\begin{align}
    \boldsymbol{A}_{i,j}=p_j\begin{bmatrix}
    \boldsymbol{V}_j^H \boldsymbol{a}_i\boldsymbol{a}_i^H\boldsymbol{V}_j^H & \boldsymbol{V}_j^H \boldsymbol{a}_i \boldsymbol{a}_i^H\boldsymbol{c}_j  \\
    \boldsymbol{c}_j^H\boldsymbol{a}_i\boldsymbol{a}_i^H\boldsymbol{V}_j& 0
  \end{bmatrix},~\forall i,j \in \mathcal{K}, \notag
\end{align}and $\Tilde{\boldsymbol{A}}_i=\sum_{j \neq i}\boldsymbol{A}_{i,j}$. 

Clearly, problem \eqref{PhaseII_2} is still non-convex. Based on the product rule for logarithms, \eqref{PhaseII_2} is re-written as    
\begin{align}
    \label{log_rewrite}
    \max_{\substack{\Tilde{\boldsymbol{U}}_{K+1},\varsigma_i}}~&\sum_{i=1}^K \bigg(\log\Big(\text{Tr}(\boldsymbol{A}_{i,i}\Tilde{\boldsymbol{U}}_{K+1})+\text{Tr}(\Tilde{\boldsymbol{A}}_i\Tilde{\boldsymbol{U}}_{K+1}) + ||\boldsymbol{a}_{i}^H||^2\sigma^2\Big)\notag\\&-\log \varsigma_i  \bigg)\\
\text{s.t.}&~\eqref{ubar1} \text{ and }\eqref{ubar2} \notag,\\&\text{Tr}(\Tilde{\boldsymbol{A}}_i\Tilde{\boldsymbol{U}}_{K+1}) + ||\boldsymbol{a}_{i}^H||^2\sigma^2 \leq \varsigma_i,~\forall i \in \mathcal{K}, \tag{31.a}
\end{align}

Problem \eqref{log_rewrite} is still not convex because the objective function involves summation of concave and convex functions. This issue can be dealt with through the SA technique. Specifically, the convex term $-\log \varsigma_i$ can be approximated by its lower bound (i.e., first-order Taylor expansion). The resulting problem will be an SDP which can be iteratively solved via CVX, updating $\varsigma_i$ in each iteration until convergence is attained. Eventually, the rank-one extraction procedure must be performed to find a feasible rank-one solution for $\Tilde{\boldsymbol{U}}_{K+1}$, from which a sub-optimal $\boldsymbol{u}_{K+1}$ is obtained. A similar process as above can be applied for optimizing the receive beamformers at the AP \big($\{\boldsymbol{a}_i\}_{i=1}^K$\big), fixing other optimization variables.  

The above procedure for optimizing $\boldsymbol{u}_{K+1}$ and $\{\boldsymbol{a}_i\}_{i=1}^K$ incurs considerable complexity as SDP problems must be solved several times for each sub-problem. This complexity increases with increasing the number of users, antennas at the AP, and IRS elements, and can be prohibitively high in large networks. 

Herein, we propose an algorithm with lower complexity for optimizing IRS reflection coefficients in time slot $K+1$, assuming minimum mean square error (MMSE) receive beamforming at the AP. Specifically, using the relationship between mean square error (MSE) and SINR in MMSE receivers, we can jointly optimize the MMSE receive beamforming vectors at the AP and the IRS reflection coefficients during $\tau_{K+1}$, using the BCD technique. 

Under the assumption of independence between different $s_{2,i}$'s and also between $s_{2,i}, \forall i \in \mathcal{K}$ and each element of the noise vector $\boldsymbol{n}$, the MSE for $U_i$'s information signal is given by  

\begin{align}
\label{MSE}
   E_i&= \mathbb{E}[|\boldsymbol{a}_i^H \boldsymbol{y}_2 - s_{2,i}|^2] \notag\\&=\sum_{j=1}^K p_{j}|\boldsymbol{a}_i^H \boldsymbol{h}_j (\boldsymbol{\Theta}_{K+1})|^2 - \sqrt{p_{i}}\big(\boldsymbol{a}_i^H \boldsymbol{h}_i(\boldsymbol{\Theta}_{K+1}) +  \boldsymbol{h}_i^H(\boldsymbol{\Theta}_{K+1}) \boldsymbol{a}_i\big) \notag\\&+||\boldsymbol{a}_i^H||^2 \sigma^2+1.
\end{align}

The following theorem establishes an equivalence between throughput maximization and MSE minimization problems. 
\begin{theorem}
The problem in \eqref{PhaseII} is equivalent to the following problem
\begin{align}
    \label{equiv}
    \min_{\substack{ \{\boldsymbol{a}_i\}_{i=1}^K,\boldsymbol{\omega},\boldsymbol{\Theta}_{K+1}}}~&\sum_{i=1}^K \big( \omega_i E_i - \log \omega_i\big)\\
\text{s.t.}&~\eqref{ThetaK+1}, \notag
\end{align}where $\boldsymbol{\omega}=[\omega_1,...,\omega_K]$, and $\omega_i$ is a weight variable associated with $U_i$.

\end{theorem}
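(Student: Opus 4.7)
The plan is to establish the equivalence by invoking the standard WMMSE argument (in the spirit of Christensen--Agarwal--Carvalho--Cioffi), adapted to the present IRS-aided uplink. Since the constraints in problems (29) and (33) coincide (both involve only \eqref{ThetaK+1}), it suffices to show that, for any fixed $\boldsymbol{\Theta}_{K+1}$, the inner minimization of the objective of (33) over the auxiliary variables $\{\boldsymbol{a}_i\}_{i=1}^K$ and $\boldsymbol{\omega}$ collapses to $K - \sum_{i=1}^K \log(1+\gamma_{2,i})$, so that the remaining minimization over $\boldsymbol{\Theta}_{K+1}$ is exactly the maximization in (29) up to an additive constant and a sign.

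First, I would observe that the objective in (33) decouples across users once $\boldsymbol{\Theta}_{K+1}$ is fixed. For each $i$, I would start by minimizing over $\boldsymbol{a}_i$ alone, treating $\omega_i > 0$ as a positive scalar that does not affect the minimizer. Writing $E_i$ in (32) as a convex quadratic form in $\boldsymbol{a}_i$ and setting $\partial E_i/\partial \boldsymbol{a}_i^{*}=\mathbf{0}$ yields the classical MMSE receiver
\begin{align*}
\boldsymbol{a}_i^{\mathrm{MMSE}} = \sqrt{p_i}\,\Bigl(\sum_{j=1}^K p_j\,\boldsymbol{h}_j(\boldsymbol{\Theta}_{K+1})\boldsymbol{h}_j^H(\boldsymbol{\Theta}_{K+1}) + \sigma^2 \boldsymbol{I}_{M_A}\Bigr)^{-1}\boldsymbol{h}_i(\boldsymbol{\Theta}_{K+1}).
\end{align*}
Substituting back and applying the matrix inversion lemma (Sherman--Morrison) to extract the $i$-th user's contribution from the covariance inverse, I expect to obtain the standard MSE--SINR relation
\begin{align*}
E_i^{\mathrm{MMSE}} \;=\; \frac{1}{1+\gamma_{2,i}},
\end{align*}
with $\gamma_{2,i}$ exactly as in (8).

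Second, having reduced the per-user objective to $\omega_i E_i^{\mathrm{MMSE}} - \log\omega_i$, I would minimize this scalar convex function over $\omega_i>0$ by first-order optimality, yielding $\omega_i^{\star} = 1/E_i^{\mathrm{MMSE}} = 1+\gamma_{2,i}$, with optimal value $1-\log(1+\gamma_{2,i})$. Summing over $i$,
\begin{align*}
\min_{\{\boldsymbol{a}_i\},\boldsymbol{\omega}}\;\sum_{i=1}^K \bigl(\omega_i E_i - \log\omega_i\bigr) \;=\; K - \sum_{i=1}^K \log(1+\gamma_{2,i}),
\end{align*}
so minimizing the outer problem over $\boldsymbol{\Theta}_{K+1}$ subject to \eqref{ThetaK+1} is equivalent to maximizing $\sum_{i=1}^K \log(1+\gamma_{2,i})$ subject to the same constraint, which is (29). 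Moreover, since the inner minimum is attained and the optimizers $\boldsymbol{a}_i^{\mathrm{MMSE}},\omega_i^{\star}$ depend continuously on $\boldsymbol{\Theta}_{K+1}$, the set of optimal $\boldsymbol{\Theta}_{K+1}$ for (29) coincides with its projection from the optimal set of (33).

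The main obstacle I anticipate is the clean verification of the identity $E_i^{\mathrm{MMSE}} = (1+\gamma_{2,i})^{-1}$: this step requires a careful Sherman--Morrison manipulation to separate the desired-signal rank-one term $p_i \boldsymbol{h}_i\boldsymbol{h}_i^H$ from the interference-plus-noise covariance matrix and to match the resulting scalar with the SINR expression in (8). Everything else---decoupling across $i$, scalar minimization over $\omega_i$, and the matching of constraint sets---is routine calculus.
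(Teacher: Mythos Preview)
Your proposal is correct and follows essentially the same WMMSE route as the paper: optimize $\boldsymbol{a}_i$ to obtain the MMSE receiver, optimize $\omega_i$ to $1/E_i^{\mathrm{MMSE}}$, and reduce the objective to $-\sum_i\log(1+\gamma_{2,i})$ plus a constant. The only cosmetic difference is that the paper establishes $E_{i,\min}^{-1}=1+\gamma_{2,i}$ by a direct algebraic manipulation (inserting $J^{-1}JJ^{-1}=J^{-1}$ and subtracting the rank-one term) rather than explicitly invoking Sherman--Morrison, but the two arguments are equivalent.
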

\begin{proof}
Please refer to the Appendix.
\end{proof}

Rewriting the MSE in \eqref{MSE} with respect to $\boldsymbol{V}_i, ~\forall i \in \mathcal{K}$ and $\boldsymbol{u}_{K+1}$, we have 

\begin{align}
\label{ei2}
    E_i&=\boldsymbol{u}_{K+1}^H \Big(\sum_{j=1}^K p_{j} \boldsymbol{V}_j^H\boldsymbol{a}_i \boldsymbol{a}_i^H \boldsymbol{V}_j\Big)\boldsymbol{u}_{K+1} \notag\\&+ \boldsymbol{u}_{K+1}^H\Big(\sum_{j=1}^K p_{j}\boldsymbol{V}_j^H \boldsymbol{a}_i \boldsymbol{a}_i^H \boldsymbol{c}_j- \sqrt{p_{i}}\boldsymbol{V}_i^H \boldsymbol{a}_i\Big)\notag\\&+ \Big(\sum_{j=1}^K p_{j}\boldsymbol{V}_j^H \boldsymbol{a}_i \boldsymbol{a}_i^H \boldsymbol{c}_j- \sqrt{p_{i}}\boldsymbol{V}_i^H \boldsymbol{a}_i\Big)^H \boldsymbol{u}_{K+1} \notag\\&+ \sum_{j=1}^K p_{j} \boldsymbol{c}_j^H \boldsymbol{a}_i\boldsymbol{a}_i^H \boldsymbol{c}_j  - \sqrt{p_{i}}(\boldsymbol{a}_i^H \boldsymbol{c}_i + \boldsymbol{c}_i^H \boldsymbol{a}_i) + ||\boldsymbol{a}_i^H||^2 \sigma^2 + 1.
\end{align}

According to Theorem 1, problem \eqref{PhaseII} can be re-written as 
\begin{align}
    \label{equiv3}
    \min_{\substack{ \{\boldsymbol{a}_i\}_{i=1}^K,\boldsymbol{\omega},\boldsymbol{u}_{K+1}}}~&\sum_{i=1}^K \big(\omega_i E_i - \log \omega_i \big)\\
\text{s.t.}&~\label{u_K1} |u_{n,K+1}| \leq 1,~\forall n \in \mathcal{N} \tag{35.a}, 
\end{align}

Applying the BCD method, problem \eqref{equiv3} can be alternately solved for $\{\boldsymbol{a}_i\}_{i=1}^K$, $\boldsymbol{\omega}$ and $\boldsymbol{u}_{K+1}$ in an iterative manner, where the optimum value for each variable is found in each iteration. Particularly, denoting  $\boldsymbol{u}_{K+1}^{(l-1)}$ as the optimal value for $\boldsymbol{u}_{K+1}$ in iteration $l-1$, the optimal receive beamforming vector for $U_i$ in iteration $l$ is given by
\begin{align}
\label{opt_a}
    &\boldsymbol{a}_i^{(l)}=\notag\\&\sqrt{p_{i}}\Bigg(\sum_{j=1}^K p_{j} \Big((\boldsymbol{V}_j \boldsymbol{u}_{K+1}^{(l-1)}+\boldsymbol{c}_j) (\boldsymbol{V}_j \boldsymbol{u}_{K+1}^{(l-1)}+\boldsymbol{c}_j)^H\Big) + \sigma^2 \boldsymbol{I}_{M_A}\Bigg)^{-1}\times \notag\\&(\boldsymbol{V}_i \boldsymbol{u}_{K+1}^{(l-1)}+\boldsymbol{c}_i),~\forall i \in \mathcal{K},
\end{align}and the optimal value for $\omega_i$ in iteration $l$ is given by
\begin{align}
\label{opt_omega}
\omega_i^{(l)}=\dfrac{1}{E_i ^{(l)}}, ~~\forall i \in \mathcal{K}, 
\end{align}where $E_i^{(l)}$ is obtained by substituting $\boldsymbol{u}_{K+1}^{(l-1)}$ and $\boldsymbol{a}_i^{(l)}$ into \eqref{ei2}. Finally, with $\{\boldsymbol{a}_i^{(l)}\}_{i=1}^K$ and $\boldsymbol{\omega}^{(l)}$, the problem for optimizing $\boldsymbol{u}_{K+1}$ in iteration $l$ is formulated as 
\begin{align}
    \label{PhaseII-Final}
    \min_{\substack{\boldsymbol{u}_{K+1}}}~& \boldsymbol{u}_{K+1}^H\boldsymbol{B}^{(l)} \boldsymbol{u}_{K+1} - 2 \text{Re} \{\boldsymbol{b}^{(l)H} \boldsymbol{u}_{K+1}\}\\
\text{s.t.}&~\label{QCQP_cons}\boldsymbol{u}_{K+1}^H \boldsymbol{D}_n \boldsymbol{u}_{K+1} \leq 1, ~\forall n \in \mathcal{N} \tag{38.a},
\end{align}where 
\begin{align}
    &\boldsymbol{B}^{(l)}= \sum_{i=1}^K \omega_i \sum_{j=1}^K p_{j} \boldsymbol{V}_j^H\boldsymbol{a}_i^{(l)} \boldsymbol{a}_i^{(l)H} \boldsymbol{V}_j\notag\\&\boldsymbol{b}^{(l)} = \sum_{i=1}^K \omega_i \Big(\sqrt{p_{i}}\boldsymbol{V}_i^H \boldsymbol{a}_i^{(l)}-\sum_{j=1}^K p_{j}\boldsymbol{V}_j^H \boldsymbol{a}_i^{(l)} \boldsymbol{a}_i^{(l)H} \boldsymbol{c}_j\Big), \notag 
\end{align}and $\boldsymbol{D}_n$ is a diagonal matrix with $1$ on its $n$-th diagonal element and $0$ elsewhere. \eqref{PhaseII-Final} is a convex quadratically-constrained quadratic program (QCQP) and can be solved using convex optimization techniques. We use the Lagrange duality method to solve \eqref{PhaseII-Final} for which the Lagrangian is given by 
\begin{align}
   & \mathcal{L}=\notag\\&\boldsymbol{u}_{K+1}^H\boldsymbol{B}^{(l)} \boldsymbol{u}_{K+1} - 2 \text{Re} \{\boldsymbol{b}^{(l)H} \boldsymbol{u}_{K+1}\} + \sum_{n=1}^N \mu_n (\boldsymbol{u}_{K+1}^H\boldsymbol{D}_n\boldsymbol{u}_{K+1}-1),
\end{align}with $\boldsymbol{\mu}=[\mu_1,...,\mu_N]$ being the Lagrange multipliers associated with the constraint \eqref{QCQP_cons}. The first-order optimality condition of $\mathcal{L}$ with respect to $\boldsymbol{u}_{K+1}$ yields
\begin{align}
    \boldsymbol{u}_{K+1} (\boldsymbol{\mu})= \big(\boldsymbol{B}^{(l)}+\sum_{n=1}^{N}\mu_n \boldsymbol{D}_n\big)^{-1}\boldsymbol{b}^{(l)}.
\end{align} $\boldsymbol{\mu}$ can be updated via the ellipsoid method with the subgradient of $\mu_n$ being given by 
   $|u_{n,K+1}|^2 - 1$. Algorithm \ref{alg_phase2} describes the steps for optimizing the IRS reflection coefficients for assisting users' active information transfer, when MMSE receiver is used at the AP. By alternately running algorithms \ref{alg_resource} and \ref{alg_phase2} until convergence, the near-optimal resource allocation, transmit and receive beamforming, and IRS reflection for active IT are obtained.
\begin{center}
\begin{algorithm}
 \SetAlgoLined
{\textbf{Inputs:} $\boldsymbol{H},\boldsymbol{g}_i,\boldsymbol{c}_i,\forall i \in \mathcal{K},\sigma^2, \boldsymbol{p}$\;}
{\textbf{Outputs:} $\boldsymbol{\Theta}_{K+1},\boldsymbol{a}_i,\forall i$\;}
{Initialize $\boldsymbol{\omega}^{(0)}$ and set $u_{n,K+1}^{(0)}=1, \forall n \in \mathcal{N}$\;}
  {$\Delta=1$, $l=0$\;}
  \While{$\Delta>\epsilon$}{
  {$l=l+1$\;}
  {Obtain $\boldsymbol{a}_i^{(l)},\forall i \in \mathcal{K}$ from \eqref{opt_a}\;}
  {Obtain $\omega_i^{(l)}, \forall i \in \mathcal{K}$ from \eqref{opt_omega}\;}
  {Obtain $\boldsymbol{u}_{K+1}^{(l)}$ by solving the QCQP in \eqref{PhaseII-Final}\;}
  {$\Delta=\big|\sum_{i=1}^K \log \big(\omega_i^{(l)}\big) - \sum_{i=1}^K \log \big(\omega_i^{(l-1)}\big)\big|$\;}}
  {Set $\alpha_{n,K+1}^*=|u_{n,K+1}^{(l)}|$,  $\theta_{n,K+1}^*=\text{arg}(u_{n,K+1}^{(l)}),~ \forall n \in \mathcal{N}$ and $\boldsymbol{a}_i^*=\boldsymbol{a}_i^{(l)},~\forall i \in \mathcal{K}$\;}
 \caption{\begin{small}Optimization of IRS Reflection for Active IT and Receive Beamforming \end{small}}
\label{alg_phase2}
\end{algorithm}\end{center}
\begin{remark}
The optimal IRS amplitude reflection coefficients in the active IT phase are not necessarily equal to 1. The reason is that the users simultaneously transmit to the AP and as is clear from the throughput expression of the users in the active IT phase (Eq. \eqref{R2i}), the power of the received signal from each user at the AP affects the throughput of other users. Therefore, optimizing the amplitude reflection of IRS elements in addition to their phase shifts is  important for maximizing the total throughput. This is in contrast to the backscatter transmission phase, where the optimal amplitude reflection coefficients are 1 (Eqs. \eqref{ao-based} and \eqref{ref_phaseI}) because each user individually transmits to the AP in its assigned time slot and maximizing the received power for each user’s signal maximizes its own throughput and the total throughput.
\end{remark}

\begin{remark}
Matrix $\boldsymbol{B}^{(l)}$ is the summation of $K^2$ rank-one matrices. If $K^2 \geq N$, matrix $\boldsymbol{B}^{(l)}$ is full-rank and invertible and so is $\boldsymbol{B}^{(l)}+\sum_{n=1}^N \mu_n \boldsymbol{D}_n$. If $K^2 < N$, the invertibility of $\boldsymbol{B}^{(l)}+\sum_{n=1}^N \mu_n \boldsymbol{D}_n$ depends on the value of $\mu_n$'s and is not guaranteed; but problem \eqref{PhaseII-Final} is still a QCQP and can be solved using CVX solvers. 
\end{remark}

\begin{remark}
The main complexity of the total throughput maximization algorithm belongs to Algorithm \ref{alg_resource}, in which SDP problems are solved via CVX. The complexity for solving each CVX problem in Algorithm \ref{alg_resource} is given by $\mathcal{O}\big(\max \{K^2,M_P\}^4 M_P^{0.5} \log (1/\varepsilon)\big)$, where $\varepsilon$ is the solution accuracy in CVX. Therefore, the overall complexity of the two-stage throughput maximization algorithm is approximated as $\mathcal{O}\big(L M_P^{4.5} \log (1/\varepsilon)\big)$ for $K^2 \leq M_P$ and  $\mathcal{O}\big(L K^8 M_P^{0.5} \log (1/\varepsilon)\big)$ for $K^2 > M_P$, where $L$ is the number of times CVX is called. 
\end{remark}

\section{Performance Evaluation}
\label{sims} 
\subsection{Simulation Setup}
We consider a 2-D Cartesian coordinate system, as shown in Fig. \ref{simset}, where the AP is located at the origin, the reference element of the IRS is placed at $(x_{\text{IRS}},y_{\text{IRS}})$ and the PS is positioned at $(x_{\text{PS}},0)$. $K$ users are evenly placed on the left half-circle centered at the PS with radius $r$.  Parameters $\eta$, $\xi$, and $p_{\text{sat}}$ for the EH model at the users are obtained by fitting the model in \eqref{our_pwl} to the real measurements reported in \cite{real_data3}. All channels are modeled by the Rician fading channel model \cite{irs3,irs4,irs10}. For example, the channel between the AP and the IRS is given by $\boldsymbol{H}=\sqrt{\frac{\kappa_h}{\kappa_h+1}}\boldsymbol{H}^{\text{LoS}}+\sqrt{\frac{1}{\kappa_h +1}}\boldsymbol{H}^{\text{NLoS}}$, where $\kappa_h$ is the Rician factor, $\boldsymbol{H}^{\text{LoS}}\in \mathbb{C}^{M_A \times N}$ and $\boldsymbol{H}^{\text{NLoS}} \in \mathbb{C}^{M_A \times N}$ are the line-of-sight (LoS) and non-line-of-sight (NLoS) components of $\boldsymbol{H}$. The LoS channel matrix is modeled as $\boldsymbol{H}^{\text{LoS}}=\hat{\boldsymbol{h}}_{M_A}(\varphi_{\text{AoA}})\hat{\boldsymbol{h}}_{N}^H(\varphi_{\text{AoD}})$,
where $\varphi_{\text{AoA}}$ and $\varphi_{\text{AoD}}$ denote the angle of arrival and angle of departure of IRS, respectively, and 
    $\hat{\boldsymbol{h}}_X(\varphi)=[1,e^{j\pi\sin(\varphi)},e^{j2\pi\sin(\varphi)},...,e^{j(X-1)\pi\sin(\varphi)}]^T$. The elements of $\boldsymbol{H}^{\text{NLoS}}$ follow the standard Rayleigh fading. $\boldsymbol{H}$ is then multiplied by the square root of the distance-dependent path-loss $C_0 (d_h/D_0)^{-\rho_h}$, where $C_0$ is the path-loss at the reference distance of $D_0=1$ meter (m), set as $C_0=-20$ dB, $d_h$ represents the distance between AP and IRS, and $\rho_h$ is the path-loss exponent of the channel between AP and IRS. Channels $\boldsymbol{F}$, $\boldsymbol{G}$, and $\boldsymbol{C}$ are defined in a similar way as $\boldsymbol{H}$, with $\kappa_f$, $\kappa_g$, and $\kappa_c$ being the corresponding Rician factors, and $\rho_f$, $\rho_g$, and $\rho_c$ denoting the path-loss exponents of the corresponding channels. 

\begin{figure}[t!]
\centering
\includegraphics[width=2.5in]{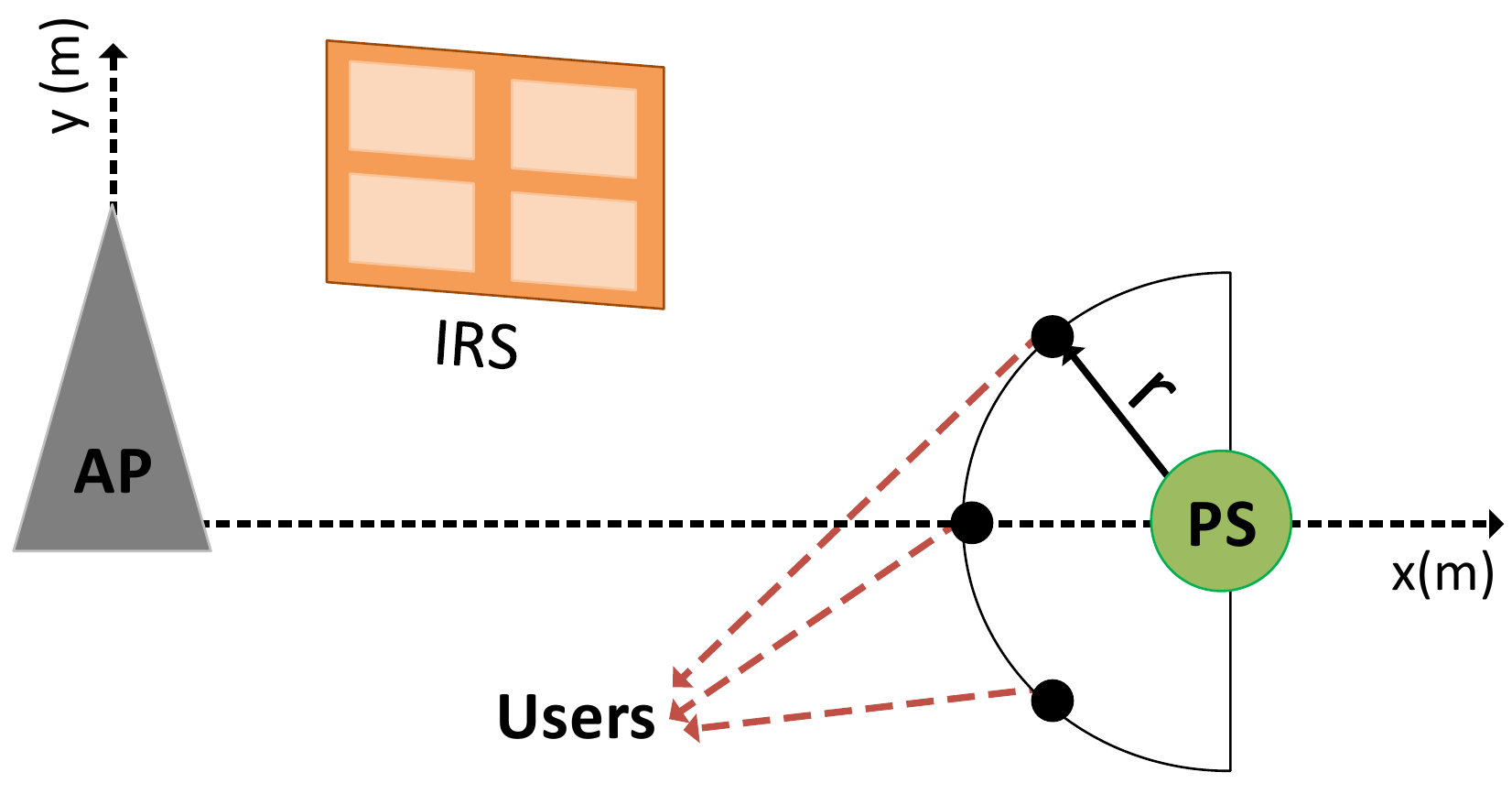}
\caption{ Simulation setup}
	\label{simset}	
\end{figure}

\begin{figure}[t!]
\centering
\includegraphics[width=1.8in]{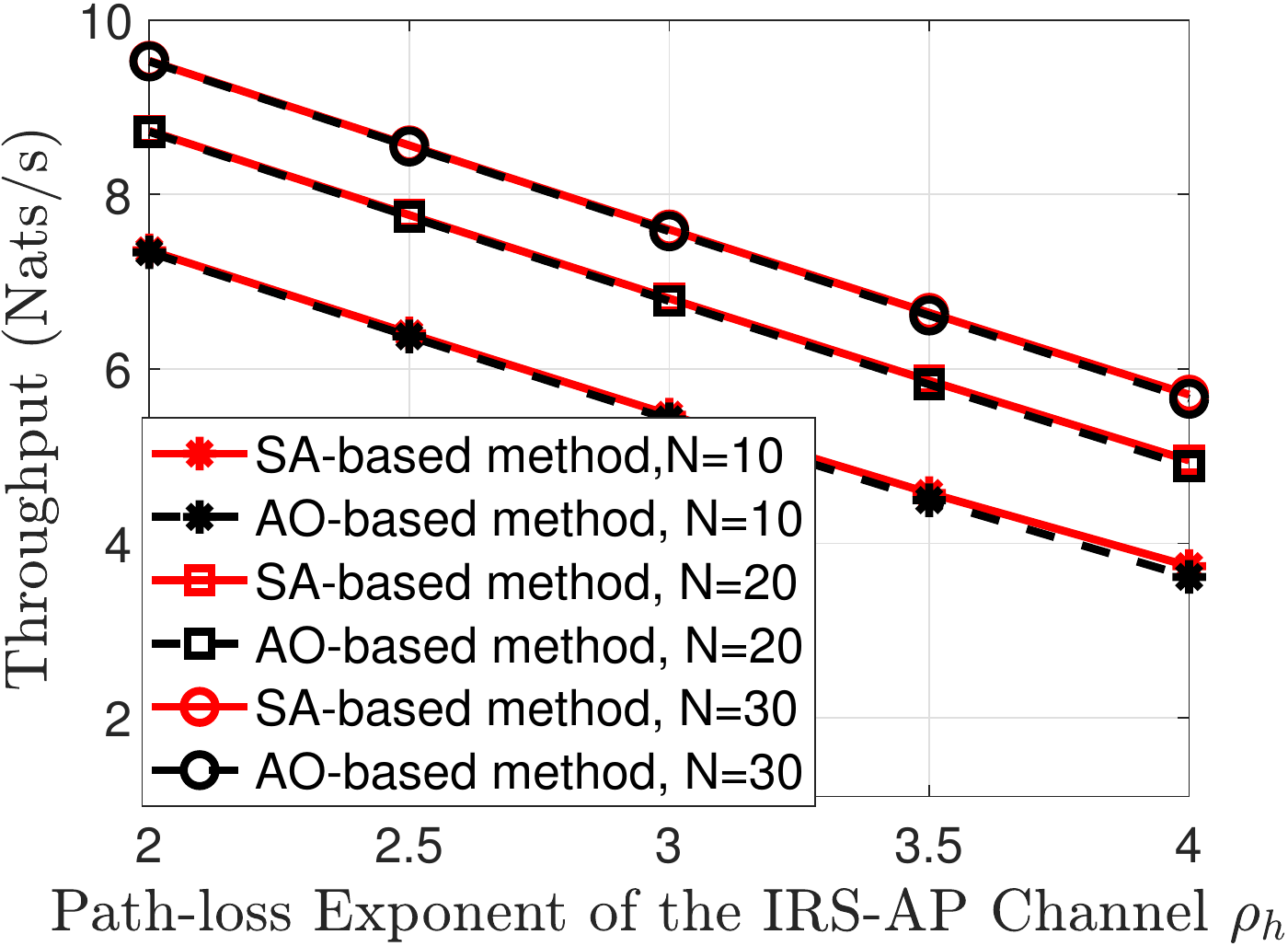}
\caption{ Comparison between SA-based and AO-based methods}
	\label{comp}
	
\end{figure}

\begin{figure}[t!]
\centering
\subfloat[]{\includegraphics[width=1.8in]{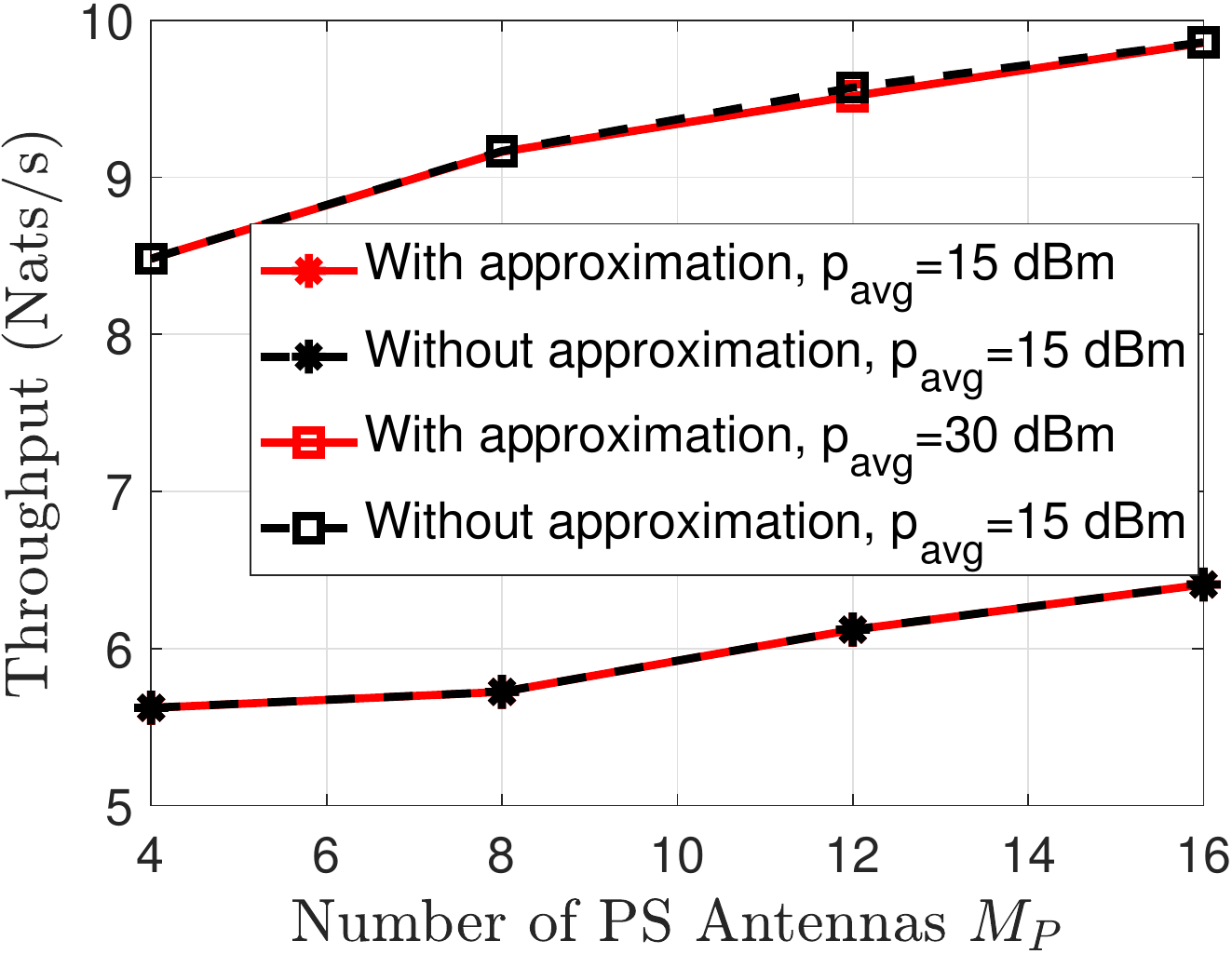}} \\
    \subfloat[]{\includegraphics[width=1.8in]{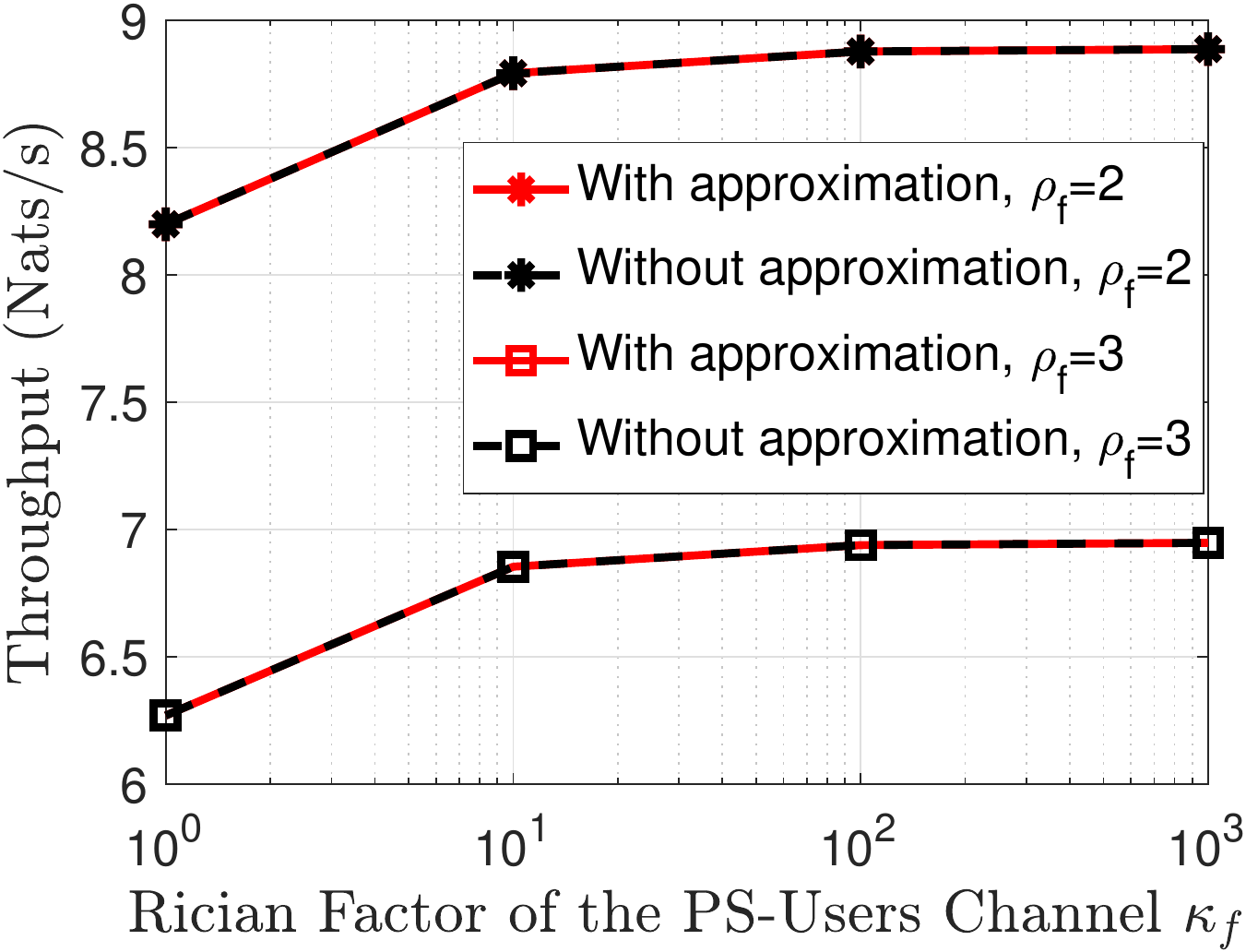}}
\caption{ Validation of the rank-one approximation (a) throughput vs. number of PS antennas (b) throughput vs. Rician factor of the PS-users channel }
 \label{app_noapp}
\end{figure}
Unless otherwise stated, the following set of parameters are used in all simulations: The number of users $K$ is assumed to be 10. The number of antennas at the PS and the AP is set as 5, i.e., $M_P=M_A=5$, and the number of elements at the IRS is set to be 25, i.e., $N=25$. The following coordinates are used for the IRS and the PS: $x_{\text{IRS}}=y_{\text{IRS}}=5$ m, $x_{\text{PS}}=30$ m. Also, the distance from the users to the PS is set as $r=10$ m. The channel-related parameters used in simulations are $\kappa_h=\kappa_f=\infty$, $\kappa_g=3$, $\kappa_c=0$, $\rho_h=\rho_f=2$, $\rho_g=2.8$, $\rho_c=3.5$. EH model parameters are obtained as $\eta=0.47$, $\xi=2.24 \times 10^{-5}$, and $p_{\text{sat}}=45$ mW. Maximum average power at the PS is set as $p_{\text{avg}}=1$ W and for the peak power at the PS we use $p_{\text{peak}}=2p_{\text{avg}}$. Backscatter coefficient is set as $\beta_i=0.6,~\forall i \in \mathcal{K}$ and the circuit power consumption for active IT is assumed to be $p_{c,i}=1$ mW, $\forall i \in \mathcal{K}$. The noise power spectral density is $-160$ dBm/Hz and the bandwidth is $1$ MHz. The stopping threshold for convergence is set as $\epsilon=0.001$ in all algorithms. The AP is assumed to apply MMSE receive beamforming for detecting users' active information signals. The results are based on the average of 1000 different channel realizations. 

\subsection{Numerical Results}
Fig. \ref{comp} compares the performance of the SA-based and AO-based methods for optimizing the reflection coefficients of the IRS elements when assisting in the users' backscatter transmission to the AP. The figure shows the network throughput versus the path-loss exponent of the channel between  AP and IRS for different number of IRS elements. According to Fig. \ref{comp}, the channel conditions between AP and IRS considerably affect the throughput. In specific, higher path-loss exponent results in greater attenuation for the signals that travel from the IRS to the AP, leading to lower SNR and throughput. Higher number of IRS elements can compensate for the poor channel conditions between AP and IRS. For instance, to achieve a throughput of $7$ Nats/s, employing $10$ elements at the IRS is sufficient when the path-loss exponent is $2.2$, while $20$ and $30$ elements are needed when the path-loss exponent is $2.9$ and $3.3$, respectively.  It can be seen that the SA-based method performs slightly better than the AO-based method; that's because in the SA-based method, the reflection coefficients of the IRS are jointly optimized in each iteration, while AO-based method optimizes each reflection coefficient individually. 

Next, we show the tightness of the rank-one approximation in Algorithm \ref{alg_resource}. To this end, we compare the throughput of the proposed scheme to the calculated throughput when no rank-one approximation is performed. Fig. \ref{app_noapp} compares the maximized throughput with and without rank-one approximation. The solid red lines represent the throughput performance of our proposed scheme when Eigen-decomposition is performed in Algorithm \ref{alg_resource} for obtaining the beamforming vectors at the PS, while for the dotted black lines, the throughput is calculated based on $\Tilde{\boldsymbol{W}}_i^{*},~\forall i \in \mathcal{K}$ without extracting the beamforming vectors $\boldsymbol{w}_i^{*},~\forall i \in \mathcal{K}$. A very important takeaway from Fig. \ref{app_noapp} is the close match between the calculated throughputs with and without rank-one approximation which validates the accuracy of the Eigen-decomposition-based technique for extracting feasible rank-one matrices $\hat{\boldsymbol{W}}_i$ from $\Tilde{\boldsymbol{W}}_i^{*},~\forall i \in \mathcal{K}$.

Figs. \ref{pavg}-\ref{xIRS} assess the performance of our proposed IRS-empowered BS-WPCN by comparing it to four benchmark schemes labeled as "IRS-assisted WPCN", "Random IRS reflection matrices", "Pre-configured IRS reflection matrices", and "Equal time allocation". In particular, "IRS-assisted WPCN" is the scheme proposed in \cite{irs10}, where the conventional WPCN is upgraded by adding an IRS for assisting in downlink WET and uplink IT. To make comparisons fair, the co-located AP and PS (i.e., HAP) in \cite{irs10} is replaced by separate AP and PS as in our proposed scheme. Also, unlike \cite{irs10} which assumes single-antenna HAP, we modify the IRS-assisted WPCN model to have multi-antenna AP and PS. Finally, the IRS is assumed to have stable energy source, so, the time and power resources expended for energy collection of IRS in \cite{irs10} can now be used by IRS for assisting the network operations. "Random IRS reflection matrices" is the scheme in which the IRS phase shifts are randomly chosen in all $K+1$ time slots, while other variables are optimized based on the techniques discussed throughout the paper. In "Pre-configured IRS reflection matrices", the phase shifts are once selected randomly and the same selected phase shifts are used in all time slots and throughout the whole simulations. This resembles the scenario where IRS elements are pre-configured and cannot be reconfigured after IRS is deployed. Finally, "Equal time allocation" refers to the case with equal time slot length for all $K+1$ slots, i.e., the duration of each time slot is $\frac{1}{K+1}$ in this scheme. We also plot the graph for the proposed scheme after applying a 2-bit-resolution phase quantization to the optimized phase shifts of IRS elements in each time slot. Particularly, each optimized continuous phase shift is quantized to its closest value from the set $\mathcal{P}=\{\frac{\pi}{4},\frac{3\pi}{4},\frac{5\pi}{4},\frac{7\pi}{4}\}$, i.e., $\theta_{n,i}^{(q)}=\text{arg} \min_{\theta \in \mathcal{P}} |\theta - \theta_{n,i}^*|$, where $\theta_{n,i}^{(q)}$ is the quantized phase shift $\forall n \in N, i \in \mathcal{K} \cup \{K+1\}$. The throughput is then calculated based on the quantized phase shifts with all other variables taking the same optimized values as before.

As expected, the throughput improves with increasing the maximum average PS transmit power, number of IRS reflecting elements, and number of antennas at the AP and the PS (Figs. \ref{pavg}-\ref{Mp}). It is worth mentioning that although the EH circuits of some or all of the users enter the saturation region with increasing $p_{\text{avg}}$ in Fig. \ref{pavg}, the performance of the proposed model continues to improve with increasing the PS average transmit power. This is because the performance of backscatter communication is not constrained by the limitations of the EH circuits and the throughput obtained from backscatter transmission always gets better with more transmit power at the PS. On the other hand, the performance of "IRS-assisted WPCN" becomes saturated at $P_{\text{avg}}=30 ~\text{dBm}$ because the users cannot harvest more energy and the throughput, which is merely based on active wireless powered transmission, cannot be further enhanced. With increasing the x-coordinate of IRS in Fig. \ref{xIRS}, the throughput first decreases because of the longer distance between IRS and AP. However, the throughout begins to increase after some point since IRS gets closer to the users and the impinging signals on the IRS become stronger. This compensates for the increased distance between IRS and AP and improves the throughput. Hence, the IRS is better to be located either close to the AP or close to the users for achieving high network throughputs. 
\begin{figure}[t!]
\centering
\includegraphics[width=1.8in]{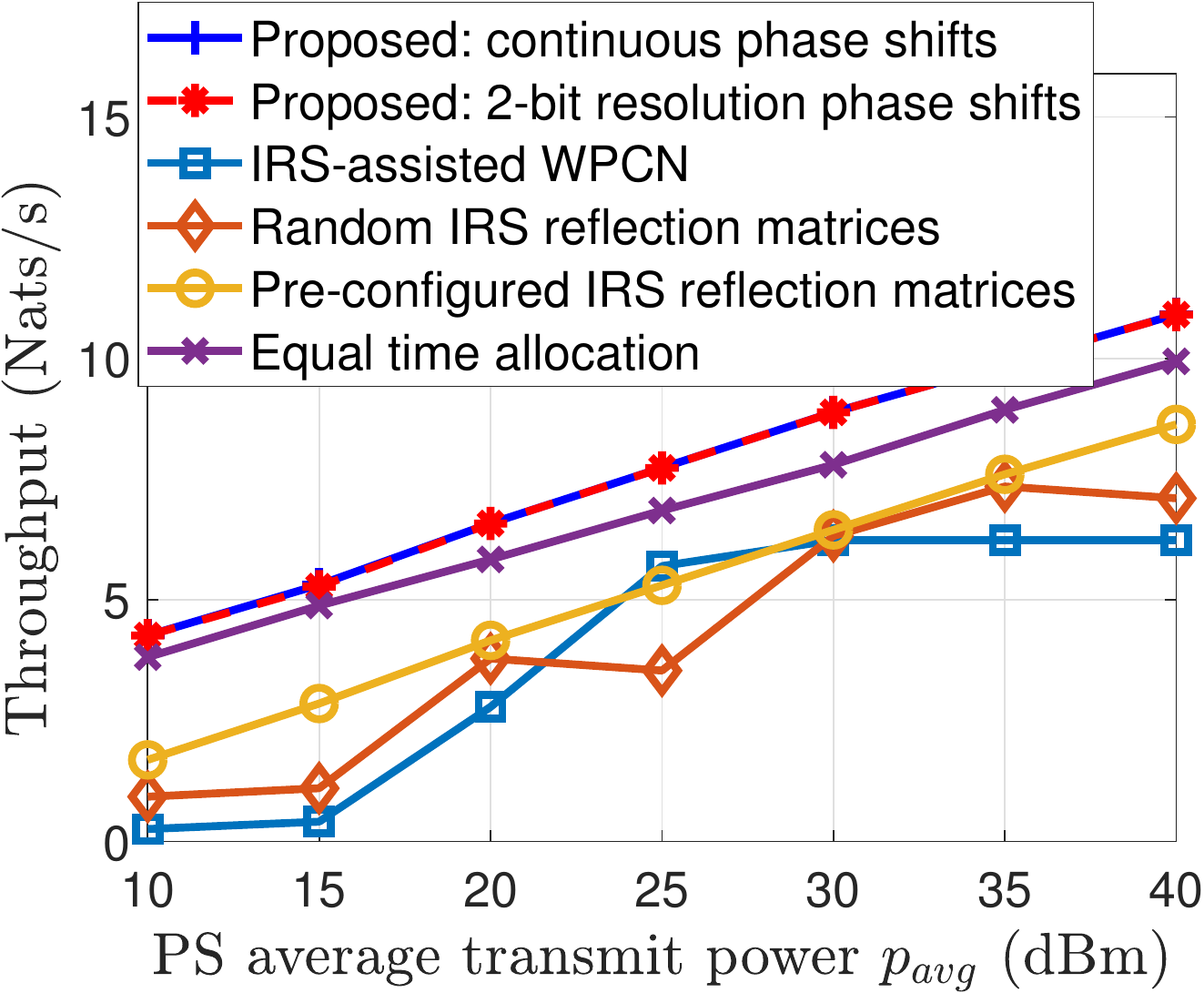}
\caption{ Throughput vs. average transmit power of the PS}
	\label{pavg}
\end{figure}

\begin{figure}[t!]
\centering
\includegraphics[width=1.7in]{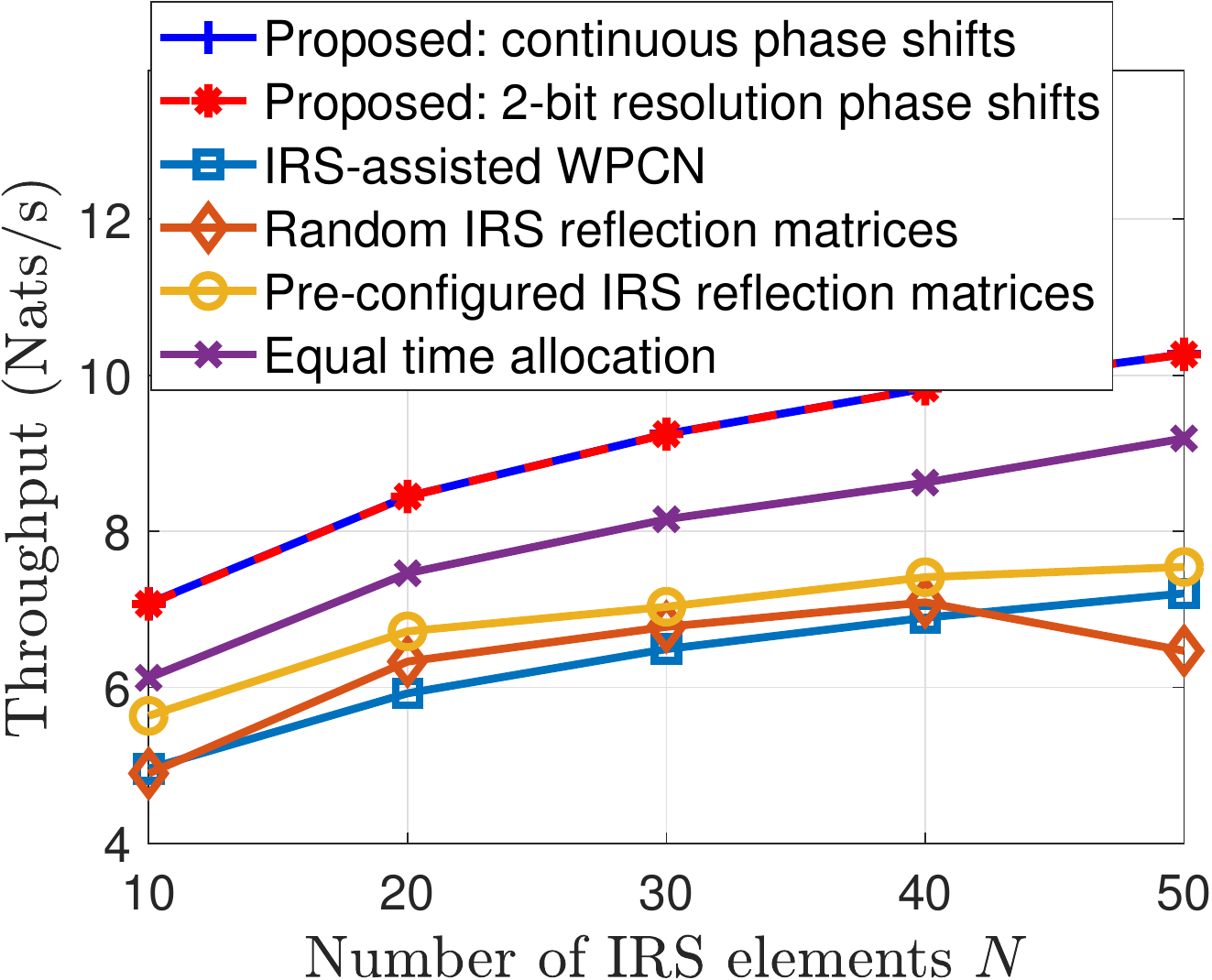}
\caption{ Throughput vs. number of IRS elements}
	\label{N}
\end{figure}

\begin{figure}[t!]
\centering
\includegraphics[width=1.7in]{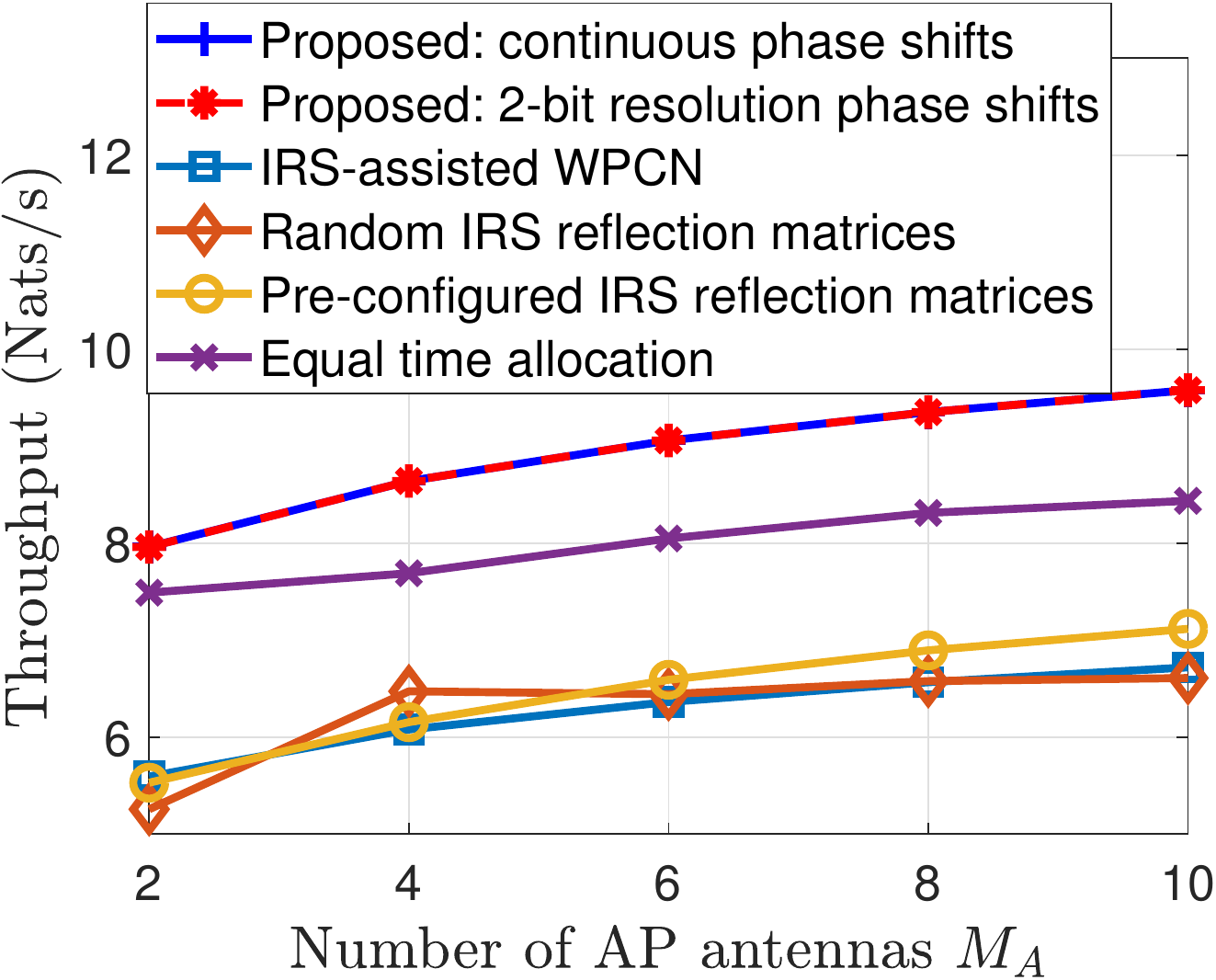}
\caption{ Throughput vs. number of antennas at the AP}
	\label{Ma}
\end{figure}
\begin{figure}[t!]
\centering
\includegraphics[width=1.7in]{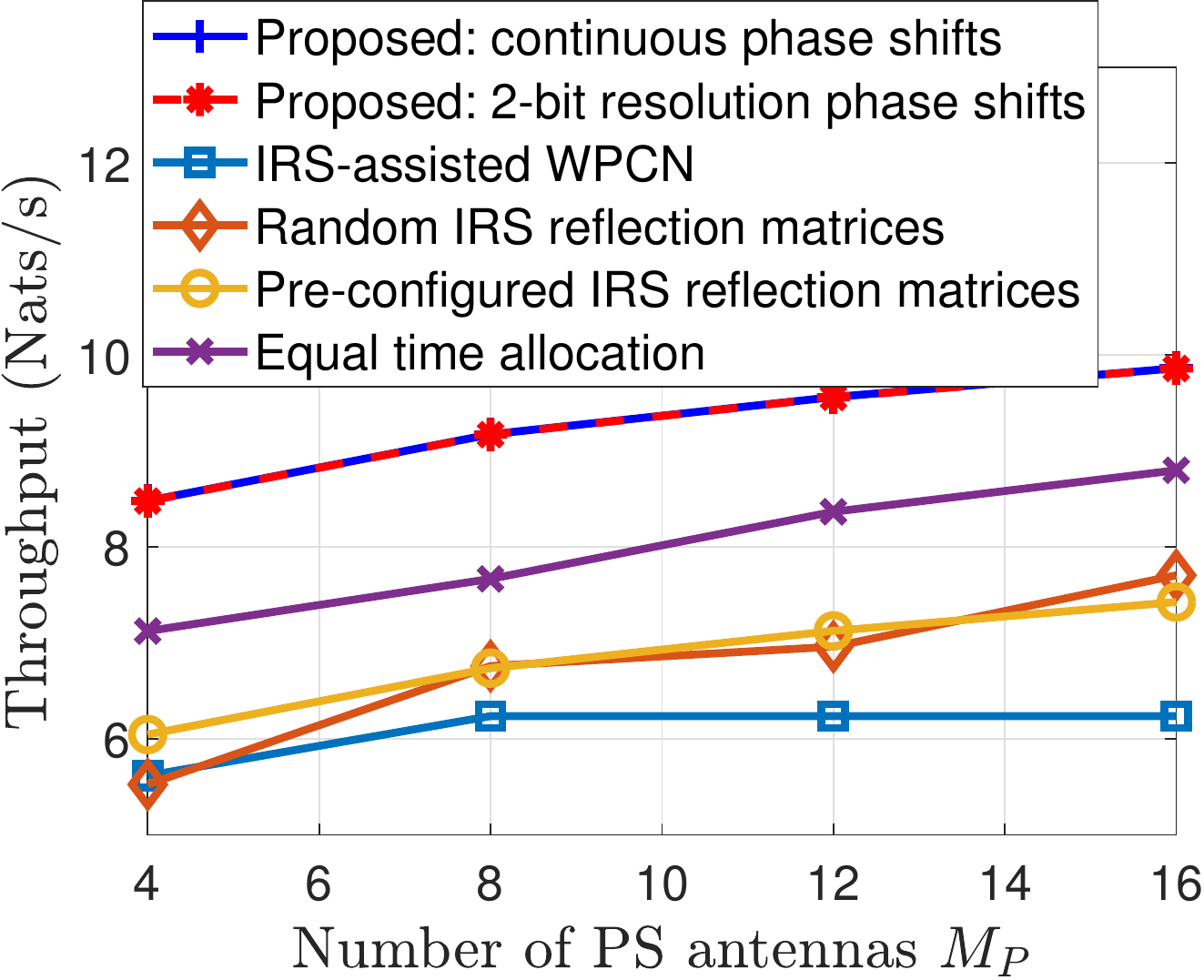}
\caption{ Throughput vs. number of antennas at the PS}
	\label{Mp}
\end{figure}

\begin{figure}[t!]
\centering
\includegraphics[width=1.7in]{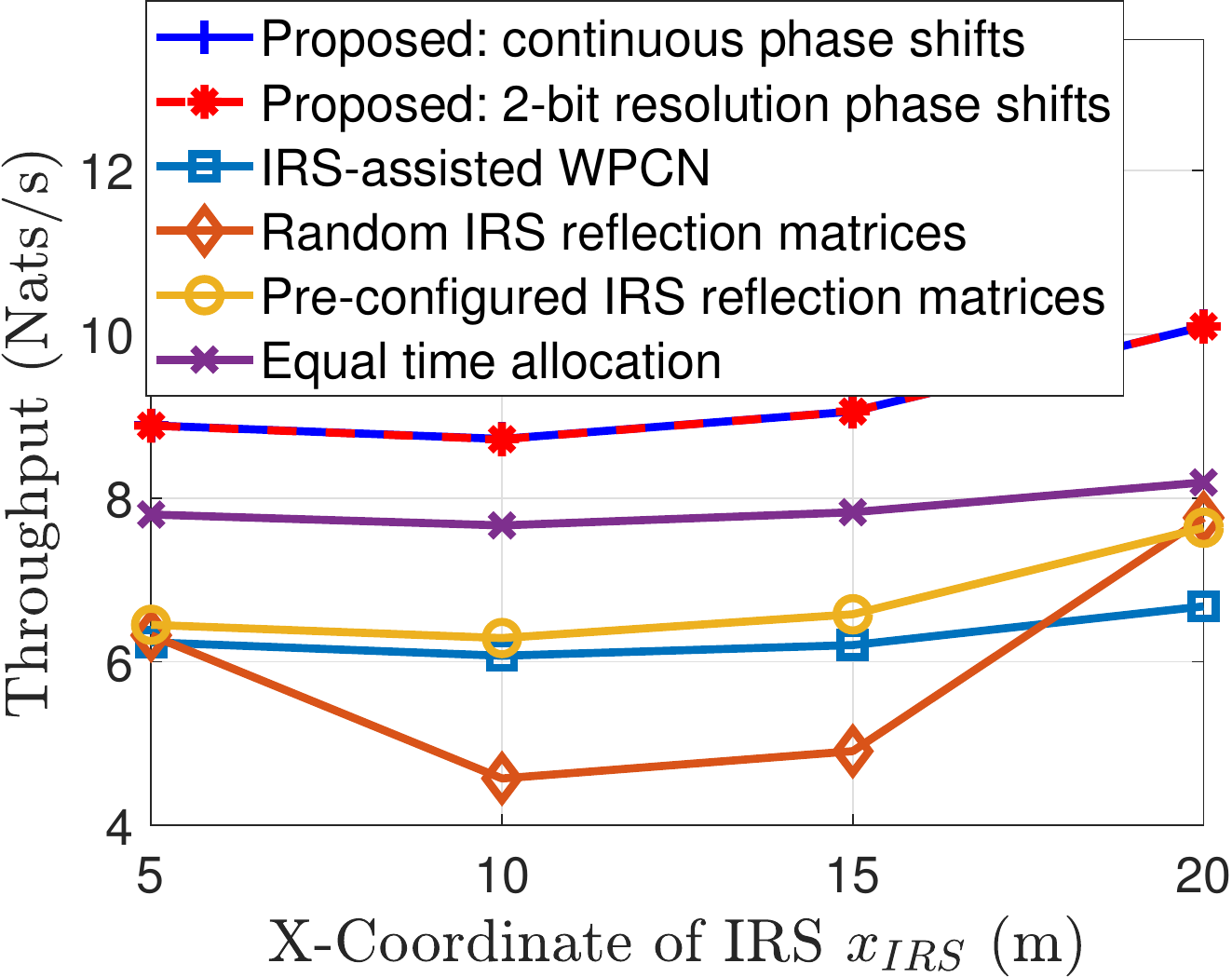}
\caption{Throughput vs. IRS x-coordinate}
	\label{xIRS}
	
\end{figure}

It is well observed that our proposed method performs remarkably better than the benchmark schemes, which endorses the efficiency of the proposed model and algorithms.  Specifically, our proposed IRS-empowered BS-WPCN considerably outperforms the IRS-assisted WPCN \cite{irs10} which demonstrates the potential of backscatter communication for improving the performance of WPCNs. Moreover, the performance of the proposed scheme is much superior to that of "Random IRS reflection matrices" and "Pre-configured IRS reflection matrices" schemes, showing the importance of real-time optimization and dynamic reconfiguration of IRS reflecting parameters. It can also be seen that the scheme with random selection of IRS phase shifts undergoes fluctuations, which is due to the fact that the randomly-chosen phase shifts may not result in constructive combination of the reflected signals with the signals of the direct path at the AP. In fact, with random phase shifts for IRS elements in each time slot, different levels of constructiveness/destructiveness can be expected for combination of signals and thus, a monotonic behavior for the "Random IRS reflection matrices" cannot be obtained. Finally, the gap between the proposed and "Equal time allocation" schemes confirms that optimizing time allocation is also important for enhancing the performance.

Another important observation is that the scheme with 2-bit-resolution quantized phase shifts almost overlaps with the one with continuous phase shifts, which indicates that our proposed algorithms can be applied to practical IRS-assisted systems with discrete phase shift values for the IRS elements. 

\section{Concluding Remarks}
This paper studied a multi-user BS-WPCN, where the backscatter and active IT of the users to the AP are assisted by an IRS. We investigated the optimization of IRS reflection coefficients, power allocation for the users' active IT, transmit beamforming of the PS, receive beamforming of the AP, and time allocation, for maximizing the total throughput of the network.  We presented a two-stage solution, where in the first stage, two methods based on AO and SA techniques have been proposed for optimizing the IRS reflection coefficients in the backscatter IT phase. In the second stage, we have used the SDR and SA techniques for optimizing AP transmit beamforming vectors, power and time allocation. Also, assuming an MMSE receiver at the AP, an efficient algorithm based on the BCD technique has been presented for jointly optimizing the AP receive beamforming vectors and IRS reflection coefficients when assisting the users' active IT. The accuracy and effectiveness of the proposed algorithms have been validated via numerical simulations. 

\section*{Appendix: Proof of Theorem 1}
Fixing $\boldsymbol{\omega}$ and $\boldsymbol{\Theta}_{K+1}$, \eqref{equiv} is convex with respect to $\boldsymbol{a}_i,~\forall i \in \mathcal{K}$, the optimal value of which can be obtained from the first-order optimality condition as

\begin{small}\begin{align}
\label{a_mmse}
\boldsymbol{a}_{i}^*=\sqrt{p_i}\Big(\sum_{j=1}^{K}p_j \boldsymbol{h}_j(\boldsymbol{\Theta}_{K+1})\boldsymbol{h}_j^H(\boldsymbol{\Theta}_{K+1})+\sigma^2\boldsymbol{I}_{M_A}&\Big)^{-1}\boldsymbol{h}_i(\boldsymbol{\Theta}_{K+1}),
\end{align}\end{small}
where the receive beamforming vector in \eqref{a_mmse} is the well-known MMSE receiver which minimizes the MSE as 
\begin{small}\begin{align}
\label{min_mse}
    E_{i,\text{min}}=1-p_i \boldsymbol{h}_i^H(\boldsymbol{\Theta}_{K+1}) J^{-1} \boldsymbol{h}_i(\boldsymbol{\Theta}_{K+1}),
\end{align}\end{small}with \begin{small}$J=\sum_{j=1}^K p_j \boldsymbol{h}_j (\boldsymbol{\Theta}_{K+1}) \boldsymbol{h}_j^H(\boldsymbol{\Theta}_{K+1})+\sigma^2 \boldsymbol{I}_{M_A}$.\end{small} Having $\{\boldsymbol{a}_i^*\}_{i=1}^K$ and $\boldsymbol{\Theta}_{K+1}$ fixed,  $\omega_i$ for minimizing  \eqref{equiv} is obtained as 
\begin{small}\begin{align}
\label{omeg}
    \omega_i^*=E_{i,\text{min}}^{-1}.
\end{align}\end{small}

Now, the problem for optimizing $\boldsymbol{\Theta}_{K+1}$ is obtained by substituting \eqref{a_mmse} and \eqref{omeg} into \eqref{equiv} as

\begin{small}\begin{align}
    \label{equiv2}
    \min_{\substack{ \boldsymbol{\Theta}_{K+1}}}~-\sum_{i=1}^K \log(E_{i,\text{min}}^{-1})&=\max_{\substack{ \boldsymbol{\Theta}_{K+1}}}\sum_{i=1}^K \log(E_{i,\text{min}}^{-1}) \\
\text{s.t.}&~\eqref{ThetaK+1}, \notag
\end{align}\end{small}

We drop the argument $\boldsymbol{\Theta}_{K+1}$ in the sequel for simplicity of notation. We have     
\begin{small}\begin{align}
\label{mse-sinr}
    &E_{i,\text{min}}^{-1}=\big(1-p_i\boldsymbol{h}_i^H J^{-1}\boldsymbol{h}_i \big)^{-1}=\Big(\dfrac{p_i\boldsymbol{h}_i^H J^{-1}\boldsymbol{h}_i -\big(p_i\boldsymbol{h}_i^H J^{-1}\boldsymbol{h}_i \big)^2}{p_i\boldsymbol{h}_i^H J^{-1}\boldsymbol{h}_i }\Big)^{-1}\notag\\&=\dfrac{p_i\boldsymbol{h}_i^H J^{-1}\boldsymbol{h}_i }{p_i\boldsymbol{h}_i^H J^{-1}\boldsymbol{h}_i -\big(p_i\boldsymbol{h}_i^H J^{-1}\boldsymbol{h}_i \big)^2}=1+\frac{\big(p_i\boldsymbol{h}_i^H J^{-1}\boldsymbol{h}_i \big)^2}{p_i\boldsymbol{h}_i^H J^{-1}\boldsymbol{h}_i -\big(p_i\boldsymbol{h}_i^H J^{-1}\boldsymbol{h}_i \big)^2}\notag\\&\overset{(\varpi_4)}{=}1+\frac{\big(p_i\boldsymbol{h}_i^H J^{-1}\boldsymbol{h}_i \big)^2}{p_i\boldsymbol{h}_i^H J^{-1}J J^{-1}\boldsymbol{h}_i -\big(p_i\boldsymbol{h}_i^H J^{-1}\boldsymbol{h}_i \big)^2}\notag\\&=1+\frac{\big(p_i\boldsymbol{h}_i^H J^{-1}\boldsymbol{h}_i \big)^2}{p_i\boldsymbol{h}_i^H J^{-1}\big(J-p_i\boldsymbol{h}_i\boldsymbol{h}_i^H\big)J^{-1} \boldsymbol{h}_i}\notag\\&=1+\frac{\big(p_i\boldsymbol{h}_i^H J^{-1}\boldsymbol{h}_i \big)^2}{p_i\boldsymbol{h}_i^H J^{-1}\big(\sum_{j\neq i}p_j\boldsymbol{h}_j\boldsymbol{h}_j^H+\sigma^2 \boldsymbol{I}_{M_A} \big)J^{-1} \boldsymbol{h}_i}\notag\\&\overset{(\varpi_5)}{=}1+\dfrac{p_{i}|\boldsymbol{a}_{i}^{*H} \boldsymbol{h}_i|^2}{\sum_{j \neq i} p_{j}|\boldsymbol{a}_{i}^{*H} \boldsymbol{h}_j|^2 + ||\boldsymbol{a}_{i}^{*H}||^2\sigma^2}=1+\gamma_{2,i,\text{mmse}},
\end{align}\end{small}where $\gamma_{2,i,\text{mmse}}$ is the SINR of $U_i$ for active IT, when MMSE receive beamforming is employed at the AP. In \eqref{mse-sinr}, $(\varpi_4)$ holds because $p_i\boldsymbol{h}_i^H J^{-1}J J^{-1}\boldsymbol{h}_i=p_i\boldsymbol{h}_i^H J^{-1}\boldsymbol{h}_i$ and $(\varpi_5)$ holds because $\boldsymbol{a}_i^*=\sqrt{p_i}J^{-1}\boldsymbol{h}_i$. The proof is completed by substituting \eqref{mse-sinr} into \eqref{equiv2}.

\end{document}